\newtheorem{theorem}{Theorem}
\newtheorem*{theorem*}{Theorem}
\newtheorem{lemma}[theorem]{Lemma}
\newtheorem{definition}[theorem]{Definition}
\newtheorem{proposition}[theorem]{Proposition}
\let\originalleft\left
  \let\originalright\right
\renewcommand{\left}{\mathopen{}\mathclose\bgroup\originalleft}
  \renewcommand{\right}{\aftergroup\egroup\originalright}
\newcommand{\de}{d_{\mathrm{eff}}}          
\newcommand{\id}{\mathbbm{1}}                
\newcommand{\expo}[1]{\operatorname{e}^{#1}} 
\newcommand{\tr}[1]{\operatorname{Tr}\left[ {#1} \right]} 
\newcommand{\trs}[1]{\operatorname{Tr}_S\left[ {#1} \right]} 
\newcommand{\trb}[1]{\operatorname{Tr}_B\left[ {#1} \right]} 
\newcommand{\trx}[2]{\operatorname{Tr}_{#2}\left[ {#1} \right]} 
\newcommand{\ket}[1]{\left|#1 \right \rangle \vphantom{\left( #1 \right)^A}} 
\newcommand{\bra}[1]{\left\langle #1 \right | \vphantom{\left(#1\right)^A}} 
\newcommand{\proj}[1]{\ket{#1}\bra{#1}}
\newcommand{\SWAP}{\operatorname{\text{\textdollaroldstyle} }} 
\newcommand{\spn}{\operatorname{span}}
\newcommand{\rank}[1]{\relax\ifmmode\operatorname{rank}#1\else rank-$#1$\fi}
\begin{document}

\title{Equilibration time scales of physically relevant observables}

\author{Luis Pedro Garc\'{i}a-Pintos}
\thanks{Corresponding author}
\email{lpgarciapintos@gmail.com}
\affiliation{School of Mathematics, University of Bristol, University Walk, Bristol BS8 1TW, U.K.}
\affiliation{Institute for Quantum Studies, Chapman University, 1 University Drive, Orange, CA 92866, USA.}

\author{Noah Linden}
\email{n.linden@bristol.ac.uk}
\affiliation{School of Mathematics, University of Bristol, University Walk, Bristol BS8 1TW, U.K.}
\author{Artur S.L. Malabarba}
\email{artur.malabarba@bristol.ac.uk}
\affiliation{H.H. Wills Physics Laboratory, University of Bristol, Tyndall Avenue, Bristol, BS8 1TL, U.K.}
\author{Anthony J. Short}
\email{tony.short@bristol.ac.uk}
\affiliation{H.H. Wills Physics Laboratory, University of Bristol, Tyndall Avenue, Bristol, BS8 1TL, U.K.}
\author{Andreas Winter}
\email{andreas.winter@uab.cat}
\affiliation{ICREA \& F\'{i}sica Te\`{o}rica: Informaci\'{o} i Fen\`{o}mens Qu\`{a}ntics, Universitat Aut\`{o}noma de Barcelona, ES-08193 Bellaterra (Barcelona), Spain }

\date{\today}
\begin{abstract}
We address the problem of understanding from first principles the conditions under which a quantum system equilibrates rapidly with respect to a concrete observable. On the one hand previously known general upper bounds on the time scales of equilibration were unrealistically long, with times scaling linearly with the dimension of the Hilbert space. 
These bounds proved to be tight, since particular constructions of observables scaling in this way were found.
On the other hand, the computed equilibration time scales for certain classes of typical  measurements, or under the evolution of typical Hamiltonians, turn out to be unrealistically short.
However neither classes of results cover physically relevant situations, which up to now had only been tractable in specific models. 
In this paper we provide a new upper bound on the equilibration time scales which, under some physically reasonable conditions, give much more realistic results than previously known.
In particular, we apply this result to the paradigmatic case of a system interacting with a thermal bath, where we obtain an  upper bound for the equilibration time scale  
independent of the size of the bath.
In this way, we find
general conditions that single out observables with realistic equilibration times within a physically relevant setup.
\end{abstract}

\maketitle

Knowing the details of how systems approach equilibrium is a major topic within statistical mechanics. 
However, deriving results on the equilibration time scales that are both general and apply to physically relevant situations has proven to be a challenge; one of the major open problems in understanding equilibration processes of quantum systems.

This paper addresses the time scales for reaching equilibrium in closed quantum systems. Recently there have been promising
advances~\cite{ShortFarrelly11,Goldstein13,
Malabarba14,
Goldstein14,Reimann16typical}, which add to the vast literature studying these issues in more specific models~\cite{Cramer08,CamposVenuti10,Kastner11, Vinayak12,Brandao12,Masanes13, Kastner13,CamposVenuti13,Fuchs13,
Zangara13,TorSan13,CamposVenuti15,
Banuls15,Kastner15} (for recent thorough reviews of this and related topics see~\cite{ReviewGogolin15}, \cite{EisertFriesdorfGogolin15} and~\cite{ReviewMarcusArnau15}).
In particular, we have learnt that typical  observables (when appropriately drawn at random) equilibrate rapidly~\cite{Goldstein13,Malabarba14,Reimann16typical}, and 
that the same is true for the evolution under typical Hamiltonians~\cite{Vinayak12,Brandao12,Masanes13} and for systems starting from typical non-equilibrium subspaces~\cite{Goldstein14}. Remarkably, this rapid equilibration has even been observed experimentally in certain systems~\cite{Reimann16typical}. 
Yet, one can construct observables which
take an extremely long time to approach equilibrium, up to a time proportional to the Hilbert space
dimension of the system~\cite{Goldstein13,Malabarba14}. 
Note that by fast vs slow equilibration we do not mean picoseconds vs years; slow can refer to timescales longer than the ``age of the universe''  for the constructions found in the papers mentioned above. %

It is important to note that the above mentioned results do not teach us a great deal about what happens for a given, physically relevant, observable. For instance, they do not answer what the time scales of equilibration for a system interacting with an environment are. Meanwhile, the typical (in the mathematical sense) measurements considered
will not generally represent physically relevant observables.

Moreover, the fact that one can always find mathematical constructions of observables which equilibrate in extremely long times, as in~\cite{Goldstein13} and~\cite{Malabarba14}, implies that extra -- more physical -- conditions are 
fundamental in singling out the observables which equilibrate within reasonable time scales.

In this paper, we consider the following physically relevant scenario -  measurements on a small system which is interacting with a large, highly mixed, bath via a given (non-random) Hamiltonian.  The main result will be to find sufficient conditions on initial state, observable and Hamiltonian that ensure reasonably fast time scales; in particular time scales that do not grow or decrease with the full dimension of the Hilbert space. 
We find that this is the case for sufficiently mixed initial states (as thermal states of not-too-low temperature are), provided some natural conditions on the off-diagonal matrix elements of the observable and initial state in the energy basis are met, that essentially ensure that a wide range of frequencies are involved in the evolution.   
This will be applied to the paradigmatic case of a small system interacting with a thermal bath in the microcanonical ensemble~\footnote{The calculations are greatly simplified in the microcanonical ensemble, although the results are also likely hold in the canonical ensemble, since the state of the bath is highly mixed for not-too-low temperatures in this case too.}, where we obtain an upper bound on the equilibration time scale which does not depend on the dimension of the bath.
Importantly, the results hereby obtained do not depend on particular details of the system under consideration.

We will say that a system equilibrates when it approaches some steady state, and
remains close to it, for some reasonably long time interval~\footnote{
 Notice that this is a more general process than approaching thermal equilibrium, since we do not restrict the equilibrium state to being thermal.
Moreover, we do not focus on the particular way in which the system approaches equilibrium and, in particular, the time decay to equilibrium need not be exponential. 
}.
Given the fact that for finite dimensional systems there always
exist revivals --- times (in general very long) in which the system returns arbitrarily close to
the initial state --- in quantum mechanics one cannot have equilibration in the strict sense. Therefore, following \cite{Lin09}, we will say that a system equilibrates if, for most times, its state is close to some fixed steady state. This fixed steady state is then called the equilibrium state. 

Here, this closeness is assessed with respect to some particular observable $A$, so we say equilibration takes place if $A$ cannot distinguish the instantaneous state from the equilibrium one. Restricting to different kinds of observables leads to different notions. Then, for instance, an observable acting on a subsystem probes whether that subsystem has equilibrated,
and what happens in the remainder of the closed system is only relevant in how it affects the evolution of this subsystem.
However, taking other kinds of observables, for example $A$ being some many body observable, gives a different view to the process. These sort of questions are particularly relevant since experiments are bringing mesoscopic quantum systems closer to observation~\cite{experiments1,experiments2,
experiments3,experiments4,experiments5}.  Notice that these situations are in general not described by master equations, and usually one needs to solve the actual evolution of the system in order to learn about time scales of equilibration.

We start in Section~\ref{sec:main result} by introducing the necessary notions for this paper, and a statement of the main result.
Section~\ref{sec:obsdepbound} contains
a general upper bound on the time averaged distance between instantaneous and equilibrium state, and an analysis 
of the time decay of this bound.
Section~\ref{sec:generaltheorem} contains an expression for the time scale of equilibration, which depends on the observable, state and Hamiltonian under consideration; the first main proof in the paper.
In~\ref{sec:microcanonical} we apply the result to the case of a system interacting with a thermal bath in the microcanonical ensemble, an important application of the previous part.
We end in Section~\ref{sec:unimodalityassumption} with an analysis of the conditions necessary to obtain reasonably fast equilibration. All detailed calculations can be found in the Appendices.

\section{Setting and special cases of the main results}
\label{sec:main result}

Consider a closed quantum system with a Hamiltonian $H$, and an initial state given by the density matrix $\rho_0$ in a Hilbert space $\mathcal{H}$. We start by focusing on a weak notion of distance between states, based on comparing the instantaneous expected value of an observable
$A$ to its equilibrium expected value,
\begin{equation}
\mathcal{\widetilde D}_A(\rho_t,\omega) = \frac{\big| \tr{\rho_t A} -\tr{\omega A} \big|^2}{4 \|A\|^2},
\end{equation}
where the evolved state is $\rho_t = e^{-iHt} \rho_0 e^{iHt}$, and $\omega = \langle \rho_t \rangle_{\infty}$ is the equilibrium state~\cite{Reimann08,Lin09}, where $\left\langle f(t) \right\rangle_T = \tfrac{1}{T} \int_0^T f(t) dt$ denotes a time average. Note that the equilibrium state is simply the initial state decohered in the energy basis, since the infinite time averaging removes any oscillating terms. The operator $A$ is assumed hermitian, with $\|A\|$ denoting its spectral norm~\footnote{The spectral norm is the maximum singular value, $||A|| = \max_{|x| \ne 0} \frac{|Ax|}{|x|}$, with $|x|$ the Euclidean norm for vectors.
We will also make use of the trace norm, defined by $||A||_1 = \tr{\sqrt{AA^\dag}}$, and the Hilbert-Schmidt norm $||A||_2 = \sqrt{\tr{AA^\dag}}$.}.
With this definition $0 \le \mathcal{\widetilde D}_A \le 1$. For simplicity we take units such that $\hbar = 1$.

Obviously, equilibration of expectation values does not imply equilibration in general, since one can have very different distributions with the same expected values. However, even for this weak notion of equilibration no reasonable time scale bounds for physically relevant observables were known up to now. Furthermore, it is easy to extend our calculations to a stricter notion of equilibration, the distinguishability between $\rho_t$ and $\omega$ given a measurement of $A$ (for completeness we show this in Appendix~\ref{app:distinguishability}). In order to distinguish the quantity $\mathcal{\widetilde D}_A(\rho_t,\omega)$ from the actual distinguishability,  we will call it the weak-distinguishability.

We can express the time average of the weak-distinguishability as
\begin{align}
\label{eq:dist1}
\left\langle  \mathcal{\widetilde D}_A(\rho_t,\omega) \right\rangle_T &= \frac{1}{4}\Bigg\langle \bigg| \sum_{j,k} e^{-i(E_j - E_k)t}(\rho_{jk} -\omega_{jk}) \frac{A_{kj}}{\|A\|} \bigg|^2 \Bigg\rangle_{T} \nonumber \\
 &= \frac{1}{4} \bigg\langle \Big| \sum_{\alpha} v_\alpha e^{-iG_\alpha t} \Big|^2 \bigg\rangle_{T} \nonumber \\
 &= \frac{1}{4} \sum_{\alpha \beta} v_\alpha v_\beta^* \left\langle e^{-i\left( G_\alpha -G_\beta \right) t} \right\rangle_T,
\end{align}
where energy levels are denoted by $E_j$, and the matrix elements of initial state, equilibrium state, and observable in the energy basis are $\rho_{jk}$, $\omega_{jk}$, and $A_{jk}$ respectively~\cite{ShortFarrelly11}. The index $\alpha$ represents pairs $(j,k)$ of levels  with distinct energies; we denote the corresponding energy gap by $G_\alpha = (E_j - E_k)$, and define the coefficients \begin{equation} 
v_\alpha = v_{(j,k)} = \rho_{jk} \frac{A_{kj}}{\|A\|}. 
\end{equation}
Notice that only terms with non-zero energy gaps appear in the sum in (\ref{eq:dist1}) since $\omega_{jk} = \rho_{jk}$ for $E_j = E_k$.

Our aim is to prove that the time average of the weak-distinguishability considered above becomes small. Since $\mathcal{\widetilde D}$ is a positive quantity this would allow us to conclude that for most times the weak-distinguishability is small, showing equilibration occurs. 
The main objective of this paper is to determine, or at least to upper bound, the time scale $T_{\text{eq}}$ in which this decay happens. 

The following normalised distribution will be crucial for our proofs:
\begin{align}
\label{eq:defdistribution}
p_{\alpha} \equiv \frac{|v_\alpha|}{Q} = \frac{1}{Q} \frac{ \big| \rho_{jk} A_{kj} \big|  }{\|A\|},
\end{align}
with the normalization factor
\begin{equation}
Q \equiv \sum_\alpha |v_\alpha| = \sum_{jk: E_j \ne E_k} |\rho_{jk}| \frac{|A_{kj}|}{\|A\|}.
\end{equation}
The distribution $p_{\alpha}$ contains information of all the physical quantities relevant for the dynamics, namely the observable $A$, initial state $\rho_0$, and the Hamiltonian $H$, and is a measure of the significance of the different energy gaps $G_{\alpha}$. 

Our main technical result is a general bound on equilibration times
for observables when the initial state is highly mixed (Theorem  \ref{theorem-genboundtime}).
Before embarking on the proofs of our general technical results, it
may be illuminating to see how they apply in certain special cases
that are of physical interest.  The first concerns a small system
interacting with a bath that is in a maximally mixed state.  The
second is a version of our main physical theorem (Theorem  \ref{theorem-boundmc}), in which
the bath is in a microcanonical state.

Let us first consider a  small system $\mathcal{S}$ of dimension $d_S$ interacting with a large bath in the maximally mixed state  $\rho_B = \frac{\id_B}{d_B^\Delta}$. We can then prove the following (this follows straightforwardly from Theorem~\ref{theorem-genboundtime} by taking $A = A_S \otimes \id_B$ and $Q$ bounded by (\ref{eq:simpleQbound})).
\begin{theorem}[Bound for system interacting with maximally-mixed bath]\label{thm:simplethem}
For any system observable $A = A_S \otimes \id_B$, initial state $\rho_0 = \rho_S \otimes \frac{\id_B}{d_B^\Delta}$, Hamiltonian $H = H_B + H_S + H_I$
\begin{align}
\label{eq:boundmcMainResult}
      \left\langle \mathcal{\widetilde D}_{A} \left( \rho_t ,\omega  \right) \right\rangle_T
&\le \frac{ \pi \, a(\epsilon) \|A_S\|^{1/2} Q^{5/2} }{T \sqrt{ \bigg| \textnormal{Tr}\Big( \Big[ \big[\rho_0,H_S\!+\!H_I \big] , H_S\!+\! H_I \Big] A_S \Big) \bigg|} } \nonumber \\
      &\quad + \pi \, \delta(\epsilon) \, Q^2, 
  \end{align}
where
\begin{align}
        \label{eq:boundmc-QfactorMainResult}
        Q &\le  \sqrt{ d_S \trs{\rho_S^2}}.
\end{align}
and  $a(\epsilon)$ and $\delta(\epsilon)$ depend on the distribution $p_\alpha$ and an arbitrary parameter $\epsilon>0$. They are described briefly below, and defined  in Proposition \ref{proposition-xidependence}. 
\end{theorem}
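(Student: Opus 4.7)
The plan is to specialise the general Theorem~\ref{theorem-genboundtime} to the product structure $A = A_S \otimes \id_B$, $\rho_0 = \rho_S \otimes \id_B/d_B$ and to convert the resulting abstract quantities into the concrete expressions appearing in (\ref{eq:boundmcMainResult})--(\ref{eq:boundmc-QfactorMainResult}). Based on the form of the bound I expect Theorem~\ref{theorem-genboundtime} to control $\langle\widetilde{\mathcal{D}}_A\rangle_T$ by a $1/T$ piece whose denominator contains the square root of a ``gap second moment'' $\sum_\alpha v_\alpha G_\alpha^2$, multiplied by appropriate powers of $Q$ and $\|A\|$, together with a plateau piece proportional to $\delta(\epsilon)Q^2$. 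My task is therefore to evaluate both the gap second moment and $Q$ in a form that manifestly depends only on system quantities.

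First I would rewrite the gap second moment as a double commutator. Since $(\rho_t)_{jk}=\rho_{jk}e^{-i(E_j-E_k)t}$ in the energy basis, differentiating $\operatorname{Tr}(\rho_t A)$ twice at $t=0$ produces $-\sum_{jk}(E_j-E_k)^2\rho_{jk}A_{kj}$, which coincides with $-\operatorname{Tr}([H,[H,\rho_0]]A)$; dividing by $\|A\|$ gives $\sum_\alpha v_\alpha G_\alpha^2 = \operatorname{Tr}([H,[H,\rho_0]]A)/\|A\|$, and terms with $E_j=E_k$ drop out automatically. To reduce $H$ to $H_S+H_I$ I would then exploit the product structure: $[\rho_0,H_B]=0$ collapses the inner commutator to $[H_S+H_I,\rho_0]$, while $[A,H_B]=0$ together with cyclicity of the trace gives $\operatorname{Tr}([H_B,[H_S+H_I,\rho_0]]A)=-\operatorname{Tr}([H_S+H_I,\rho_0][H_B,A])=0$. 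Hence $\operatorname{Tr}([H,[H,\rho_0]]A) = \operatorname{Tr}([[\rho_0,H_S+H_I],H_S+H_I] A_S\otimes\id_B)$, which is exactly the quantity appearing under the square root in (\ref{eq:boundmcMainResult}).

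Second, for the bound on $Q$ I would apply Cauchy--Schwarz to its defining sum, obtaining $Q \le \|\rho_0\|_2 \|A\|_2/\|A\|$. The Hilbert--Schmidt norms are basis-independent, so I can evaluate them in a product basis: $\|\rho_0\|_2^2=\operatorname{Tr}_S(\rho_S^2)/d_B$ and $\|A\|_2^2=d_B\operatorname{Tr}_S(A_S^2)\le d_B\, d_S\|A_S\|^2$. The bath dimension cancels between the two factors, and since $\|A\|=\|A_S\|$ I arrive at $Q \le \sqrt{d_S\operatorname{Tr}_S(\rho_S^2)}$, which is (\ref{eq:boundmc-QfactorMainResult}).

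The only substantive step here is Theorem~\ref{theorem-genboundtime} itself; once it is granted, the present specialisation is almost entirely algebraic. The physically interesting message --- that the resulting bound does not grow with the bath dimension --- is the consequence of two cancellations: $d_B$ cancels between $\|\rho_0\|_2^2$ and $\|A\|_2^2$ in the $Q$ estimate, and $H_B$ disappears from the commutator governing the short-time behaviour of $\operatorname{Tr}(\rho_t A)$. The main subtlety I would watch for is ensuring that the abstract dependence of Theorem~\ref{theorem-genboundtime} on the distribution $p_\alpha$ reduces on specialisation to precisely the gap-second-moment form above, rather than to a finer quantity for which these dimensional cancellations would fail.
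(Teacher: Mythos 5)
Your proposal is correct and follows essentially the same route as the paper: it specialises Theorem~\ref{theorem-genboundtime}, reduces the double commutator from $H$ to $H_S+H_I$ using $[\rho_0,H_B]=[A,H_B]=0$ together with cyclicity of the trace (exactly the observation the paper makes after Theorem~\ref{theorem-boundmc}), and obtains $Q\le\sqrt{d_S\operatorname{Tr}_S(\rho_S^2)}$ by Cauchy--Schwarz. The only cosmetic difference is that you bound $Q$ directly via $\|\rho_0\|_2\|A\|_2/\|A\|$ in the product basis, whereas the paper routes through $Q^2\le d\operatorname{Tr}(\rho_0^2)$ with $d$ the rank of $\omega$ (eq.~(\ref{eq:simpleQbound})); both are two-line Cauchy--Schwarz arguments yielding the identical bound.
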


Crucially, we will show in Section~\ref{sec:obsdepbound} that if the initial state $\rho_0$, observable $A$, and Hamiltonian $H$ are such that $p_\alpha$ is spread over many different energy gaps and approximately unimodal then we can choose $\epsilon$ such that $\delta(\epsilon) \ll 1$ and $a(\epsilon) \sim 1$.
We will argue that this is to be expected for a wide range of initial states in systems with interacting Hamiltonians, 
and in Appendix~\ref{app:simulation} show it explicitly in a simulation of a spin ring, i.e. a 1D Ising model
with transversal magnetic field and periodic boundary conditions.

Moreover, we will argue in Section~\ref{sec:subsys-equil-mc} that we would expect to achieve a reduction in $\delta(\epsilon)$ as the size of the bath increases, hence the second term in eq.~\eqref{eq:boundmcMainResult} becomes small for large baths, and we obtain that equilibration  occurs for large enough times $T$.

We can think of  Theorem~\ref{thm:simplethem} as describing the system coupled to an infinite-temperature bath. To extend the analysis to a more physically realistic finite temperature bath (with inverse temperature $\beta$), we consider a bath  which is initially in the microcanonical ensemble. Hence, the bath starts in a state $\rho_B = \frac{\id_B^\Delta}{d_B^\Delta}$, where $\id_B^\Delta$ is the projector on some microcanonical window of width $\Delta$ and dimension $d_B^\Delta$.

We can then prove the following (this is Theorem~\ref{theorem-boundmc} in Section~\ref{sec:generaltheorem}, applied to the special case in which $A=A_S \otimes \id_B$. Theorem~\ref{theorem-boundmc} also applies to general observables). 
\begin{theorem}[Bound for system interacting with thermal bath]
For any system observable $A = A_S \otimes \id_B$, initial state $\rho_0 = \rho_S \otimes \frac{\id_B^\Delta}{d_B^\Delta}$, Hamiltonian $H = H_B + H_S + H_I$, and any $K > 0$ and $\epsilon>0$, the weak-distinguishability satisfies
\begin{align}
      \left\langle \mathcal{\widetilde D}_{A} \left( \rho_t ,\omega  \right) \right\rangle_T
&\le \frac{ \pi \, a(\epsilon) \|A_S\|^{1/2} Q_2^{5/2} }{T \sqrt{ \bigg| \textnormal{Tr}\Big( \Big[ \big[\rho_0,H_S\!+\!H_I \big] , H_S\!+\! H_I \Big] A_S \Big) \bigg|} } \nonumber \\
      &\quad + \pi \, \delta(\epsilon) \, Q_2^2  + \frac{18 }{K^2}, 
    \end{align}
where $a(\epsilon)$ and $\delta(\epsilon)$  are defined  in Proposition \ref{proposition-xidependence} . For a bath with density of states proportional to $ e^{\beta E}$ in the vicinity of the   microcanonical window, 
\begin{align}
        Q_{2} &\le   \sqrt{\frac{d_S \trs{\rho_S^2}\, e^{ \beta \|H_S\| +  \big( 1 + \sqrt{2d_s} \big) K \beta \|H_I\| }}{\left(1-\tfrac{1}{K}\right)\left(1-e^{-\beta \Delta} \right)}}+ \frac{2}{K}.
\end{align}
\end{theorem}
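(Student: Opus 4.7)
My plan is to reduce the microcanonical case to a setting where Theorem~\ref{theorem-genboundtime} applies, at the cost of a controlled $O(1/K)$ truncation error in the initial state. Introducing the parameter $K>0$, I would build a modified initial state $\rho_0'$ close to $\rho_0 = \rho_S \otimes \id_B^\Delta / d_B^\Delta$ but supported within a widened joint-energy window of $H = H_B+H_S+H_I$ whose width is proportional to $K \|H_I\|$ around the microcanonical window of $H_B$; the weight of $\rho_0$ outside this window is controlled by a Chebyshev-type estimate on energy fluctuations, yielding $\|\rho_0 - \rho_0'\|_1 = O(1/K)$. The choice of $\rho_0'$ is tuned so that its matrix elements in the energy basis of the full $H$ can be estimated using the stated density of states $\propto e^{\beta E}$ near the window.

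\textbf{Applying the general theorem.} Theorem~\ref{theorem-genboundtime} applied to $\rho_0'$ gives the first two terms on the right-hand side verbatim, with the normalisation factor $Q$ of (\ref{eq:defdistribution}) evaluated on $\rho_0'$ and renamed $Q_2$. The discrepancy between $\rho_0$ and $\rho_0'$ propagates through unitary evolution as $\|\rho_t-\rho_t'\|_1 = \|\rho_0-\rho_0'\|_1$, and through the equilibrium state via the same bound after dephasing. Since $\mathcal{\widetilde D}_A$ depends quadratically on the difference of expected values, triangle inequality converts the linear $O(1/K)$ trace-norm error into the additive $18/K^2$ remainder (the constant $18$ absorbing the cross-terms, the equilibrium-state correction, and $\mathcal{\widetilde D}_A \le 1$ used outside the window). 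The double-commutator denominator is unaffected, since $H_B$ contributes trivially to $[\rho_0,H_S+H_I]$ on the bath identity piece, and this structure survives the truncation.

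\textbf{Bounding $Q_2$ and the main obstacle.} The bulk of the work is to estimate $Q_2 = \sum_\alpha |\rho'_{jk} A_{kj}|/\|A\|$. I would apply Cauchy--Schwarz to split bath and system contributions, using $A=A_S \otimes \id_B$, and recover the $\sqrt{d_S \trs{\rho_S^2}}$ factor inherited from Theorem~\ref{thm:simplethem}. The bath-side sum is where the density of states $\propto e^{\beta E}$ intervenes: shifts of bath energies by at most $\|H_S\|$, produced by moving between product and joint energy bases, give the Gibbs prefactor $e^{\beta \|H_S\|}$, while the geometric tail within the microcanonical window contributes $1/(1-e^{-\beta \Delta})$; the renormalisation after truncating gives $(1-1/K)^{-1}$ and the additive $2/K$. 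The hardest step, and the genuine obstacle, is controlling $H_I$ when passing from the $H_B$ eigenbasis to the $H$ eigenbasis, because $H_I$ smears bath energies across a band of width $\sim K\|H_I\|$; a careful Cauchy--Schwarz with the $d_S$-dimensional system factor converts this band into the exponent $(1+\sqrt{2 d_s})K\beta \|H_I\|$. This is exactly where the trade-off in $K$ becomes visible: larger $K$ tightens the $18/K^2$ tail but inflates the exponential in $Q_2$, and the bound is sharp only once these two effects are balanced.
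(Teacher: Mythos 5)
Your overall architecture --- truncate $\rho_0$ to a widened energy window of the full $H$ with trace-norm error $O(1/K)$ (the paper does this via Bhatia's perturbation theorem plus the gentle-measurement lemma, which indeed yields the Chebyshev-like $4d_S\|H_I\|^2/\eta^2$ tail you anticipate), bound the truncated dimension with the density of states to get $Q_2$, and propagate the $O(1/K)$ error into an additive $O(1/K^2)$ term in the weak-distinguishability --- matches the paper's route. However, there is a genuine gap at the step you dismiss with ``this structure survives the truncation.'' Applying Theorem~\ref{theorem-genboundtime} to the truncated state $\rho_0'=\Pi\rho_0\Pi$ produces a denominator $\big|\mathrm{Tr}\big(\big[[\rho_0',H],H\big]A\big)\big|^{1/2}$, not the stated one with $\rho_0$. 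The discrepancy between the two traces is controlled only by $\|\rho_0-\rho_0'\|_1\,\|H'\|^2\,\|A\|\sim \|H\|^2\|A\|/K$, where $\|H\|$ contains $\|H_B\|$ and hence grows extensively with the bath, whereas the target quantity $\mathrm{Tr}\big(\big[[\rho_0,H_S+H_I],H_S+H_I\big]A_S\big)$ does not. To make this error subdominant you would need $K$ to grow with the bath size, which destroys the $Q_2$ bound through the factor $e^{(1+\sqrt{2d_S})K\beta\|H_I\|}$. So the naive ``the denominator is unaffected'' claim cannot be repaired by choosing $K$; this is exactly the obstacle the paper flags at the start of Section~\ref{sec:subsys-equil-mc}.

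The paper's resolution, which your proposal is missing, is the auxiliary construction of Appendix~\ref{app:truncation-commutators}: two extra energy eigenvectors $\ket{j_{\min}},\ket{j_{\max}}$ are adjoined at the extremes of the spectrum, and a compensating weight $x$ (shown to satisfy $|x|\le 4/K$) is placed on them so that the gap variance $\overline{\sigma}_G^2$ of the modified pair $(\overline{\rho}_0,\overline{A})$ is bounded below by $\tfrac{1}{\overline{Q}\|\overline{A}\|}\big|\mathrm{Tr}\big(\big[[\rho_0,H],H\big]A\big)\big|$ with the \emph{original} state, while $(\overline{\rho}_0,\overline{A},\overline{H})$ reproduces the original expectation values up to $O(\|A\|/K)$. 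This construction is also where the additive $2/K$ in $Q_2$ and the specific constant $18$ in $18/K^2$ come from (the $6\|A\|/K$ deviation combines the $2/K$ truncation error with the $4/K$ from $|x|$); your accounting, which attributes $2/K$ to renormalisation and obtains the quadratic tail from the truncation alone, would not reproduce these. The $Q_2$ estimate itself and the $H_B$-commutes reduction to $H_S+H_I$ in your last step are fine.
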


 Considering a sufficiently large bath and choosing the constant $K$ such that the last term is also small, we obtain that equilibration eventually occurs for large enough times $T$.

More precisely, we find that the system will be equilibrated with respect to $A$, in the sense described 
above,
for times $T \gg T_{\text{eq}}$, where 
\begin{equation} 
 T_{\text{eq}} \equiv \frac{\pi \, a(\epsilon) {\|A_S\|}^{1/2} Q_2^{5/2}}{\sqrt{\bigg| \text{Tr}\Big( \Big[ \big[\rho_0,H_S\!+\! H_I \big] , H_S\!+\! H_I\Big] A_S \Big) \bigg|}}. \label{eq3:genboundtimeestimate}
\end{equation}
Crucially, note that if interactions between the system and bath are short-range, such that $H_I$ only couples the system to a finite region in the bath (e.g. nearest neighbour interactions in a spin lattice) then  $ T_{\text{eq}}$ does not scale with the size of the bath. Instead it depends on details of the system and its coupling to the environment, and can be easily calculated from the initial state, observable and Hamiltonian once $a(\epsilon)$ has been estimated.

In the next section we will show a general bound on the weak-distinguishability, setting the ground for proving the above results.
 
\section{General bound on average distance}
\label{sec:obsdepbound}

Since $\mathcal{\widetilde D}_A$ is a positive quantity, it satisfies
\begin{align}
\label{eq:dist2}
\left\langle  \mathcal{\widetilde D}_A(\rho_t,\omega) \right\rangle_T \le \frac{5 \pi}{4} \left\langle  \mathcal{\widetilde D}_A(\rho_t,\omega) \right\rangle_{L_T},
\end{align}
where $\langle f(t) \rangle_{L_T} \equiv \int_{-\infty}^\infty \frac{f(t) T}{ T^2 + (t - T/2)^2} \frac{dt}{\pi}$ denotes the Lorentzian time average of the function 
$f$~\cite{Malabarba14}. 
Upper bounding the sum in (\ref{eq:dist1}) by taking the absolute value of all of the terms, incorporating (\ref{eq:dist2}), and using the fact  that $\left| \langle e^{i \nu t} \rangle_{L_T} \right| = e^{-|\nu| T}$,  we get
\begin{align}
\label{eqapp:xi}
\left\langle  \mathcal{\widetilde D}_A(\rho_t,\omega) \right\rangle_T \le \frac{5 \pi Q^2}{16} \sum_{ \alpha \beta   }   p_\alpha p_\beta e^{-|G_\alpha -G_\beta| T},
\end{align}
with  
$p_{\alpha} \equiv \frac{|v_\alpha|}{Q}$ and $Q \equiv \sum_\alpha |v_\alpha|$.

We are interested in the decay in time of $\left\langle \mathcal{\widetilde D}_A(\rho_t,\omega) \right\rangle_T$. For the normalized probability distribution $p_\alpha$ we define the function $\xi_p$ as follows. 
\begin{definition}[ ]
    \label{definition-xi}
Given any normalized probability distribution $p$ over the values of a real variable $Y$, we define $ \xi_p \left( x \right)$ as the maximum probability of any interval of length $x$. In particular, when $Y$ is discrete,
   \begin{align}
   \xi_p \left( x \right) = \max_{y_0 \in \mathbb{R}} \sum_{\alpha: y_\alpha \in \big[ y_0, y_0 + x \big]}  p_\alpha.
   \end{align}
\end{definition}

We prove in Appendix~\ref{app:xifunc} the following.
\begin{proposition}[General bound]
    \label{proposition-xibound}
    For any initial state $\rho_0$, any Hamiltonian, and any
    observable $A$,
    \begin{equation}
        \label{eq:genbound}
        \left\langle  \mathcal{\widetilde D}_A(\rho_t,\omega) \right\rangle_T \le \pi Q^2 \xi_p \big( \tfrac{1}{T} \big).
    \end{equation}
\end{proposition}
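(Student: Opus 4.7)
My plan is to bound the double sum appearing in equation~(\ref{eqapp:xi}) by the single quantity $\xi_p(1/T)$, using an integral representation of the exponential together with a subadditivity property of $\xi_p$.

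First I would start from the bound already established,
\begin{equation*}
\left\langle  \mathcal{\widetilde D}_A(\rho_t,\omega) \right\rangle_T \le \frac{5 \pi Q^2}{16} \sum_{ \alpha \beta   }   p_\alpha p_\beta\, e^{-|G_\alpha -G_\beta| T},
\end{equation*}
and focus entirely on controlling $S := \sum_{\alpha\beta} p_\alpha p_\beta e^{-|G_\alpha-G_\beta|T}$. The idea is to rewrite the exponential as $e^{-|x|T} = \int_0^\infty T e^{-sT}\, \mathbf{1}_{\{|x|\le s\}}\,ds$, so that, after interchanging sum and integral,
\begin{equation*}
\sum_{\beta} p_\beta\, e^{-|G_\alpha-G_\beta|T}
= \int_0^\infty T e^{-sT} \!\!\!\!\sum_{\beta:\, G_\beta \in [G_\alpha-s,G_\alpha+s]}\!\!\!\! p_\beta\, ds
\le \int_0^\infty T e^{-sT}\, \xi_p(2s)\, ds,
\end{equation*}
since the inner set of $G_\beta$ lies in an interval of length $2s$, which by Definition~\ref{definition-xi} has $p$-mass at most $\xi_p(2s)$.

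Next I would exploit the elementary subadditivity $\xi_p(k x)\le \lceil k\rceil\, \xi_p(x)$, which holds because any interval of length $kx$ can be covered by $\lceil k\rceil$ intervals of length $x$. Taking $x=1/T$ yields $\xi_p(2s) \le (2sT+1)\,\xi_p(1/T)$, and then the change of variables $u=sT$ gives
\begin{equation*}
\int_0^\infty T e^{-sT}\, \xi_p(2s)\, ds \le \xi_p\!\left(\tfrac{1}{T}\right)\int_0^\infty (2u+1) e^{-u}\, du = 3\,\xi_p\!\left(\tfrac{1}{T}\right).
\end{equation*}
Summing over $\alpha$ against the probability $p_\alpha$ (which contributes a factor of $1$) gives $S \le 3\,\xi_p(1/T)$, and substituting into the bound above produces the factor $\tfrac{15\pi}{16} Q^2 \xi_p(1/T) \le \pi Q^2 \xi_p(1/T)$, as required.

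There is no genuinely hard step here; the argument is essentially a bookkeeping exercise. The only mild subtlety I would be careful about is justifying the interchange of the discrete sum and the $ds$-integral (trivial by monotone convergence, since all terms are nonnegative) and choosing the right subadditivity estimate for $\xi_p$: using $\xi_p(2s)\le (2sT+1)\xi_p(1/T)$ rather than a cruder linear-in-$s$ bound is what makes the resulting integral finite and, more importantly, small enough that the prefactor $\tfrac{5\pi}{16}\cdot 3$ fits under the claimed $\pi$.
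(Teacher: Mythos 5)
Your proof is correct and follows essentially the same strategy as the paper's: both decompose the exponential kernel $e^{-|G_\alpha-G_\beta|T}$ into contributions from intervals of length $O(1/T)$ and bound the mass of each by $\xi_p\big(\tfrac{1}{T}\big)$ --- the paper via the discrete staircase bound $e^{-|x|}\le\sum_{n}e^{-n}g(|x|-n)$, giving the constant $\tfrac{2}{1-e^{-1}}\approx 3.16$, and you via the exact layer-cake representation of the exponential plus subadditivity of $\xi_p$, giving the constant $3$. Both constants fit under the claimed prefactor $\pi$ once multiplied by $\tfrac{5\pi}{16}$, so your argument goes through.
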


The function $\xi_p(x)$ will in general be difficult to compute explicitly, but for small $x$ it can be bounded (and well approximated) by a linear function. We will capture this behaviour in the following.
\begin{proposition}
    \label{proposition-xidependence}
    For any distribution $p$,
    \begin{align}
    \label{eq:xibound0}
        \xi_p \left( x \right) \le \frac{\xi_p(\epsilon)}{\epsilon} x + \xi_p(\epsilon), \qquad \forall \epsilon \in \left(0,\infty \right).
    \end{align}
It will be convenient to re-express this as
\begin{align}
\label{eq:xibound}
    \xi_p \left( x \right) \le \frac{a(\epsilon)}{\sigma} x + \delta(\epsilon),
\end{align}
where $\sigma$ is the standard deviation of the distribution, and we define 
\begin{equation}
\label{eq:aanddelta}
a(\epsilon) = \frac{\xi_p(\epsilon)}{\epsilon} \sigma, \qquad \delta(\epsilon) = \xi_p(\epsilon).
\end{equation}
\end{proposition}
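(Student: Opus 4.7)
The plan is to prove the inequality by exploiting two elementary structural properties of $\xi_p$, namely monotonicity and a ``sub-additivity on integer multiples'' property, and then combining them via a ceiling argument.

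First I would verify that $\xi_p$ is non-decreasing: if $x_1 \le x_2$, every interval of length $x_1$ is contained in some interval of length $x_2$, so the maximum probability cannot decrease. Next I would establish the sub-additivity-type bound
\begin{equation}
\xi_p(n\epsilon) \le n\,\xi_p(\epsilon) \qquad \text{for every positive integer } n.
\end{equation}
This follows by observing that any interval $I$ of length $n\epsilon$ can be partitioned into $n$ consecutive sub-intervals $I_1,\dots,I_n$ of length $\epsilon$; the probability weight on $I$ is the sum of weights on the $I_k$, each of which is at most $\xi_p(\epsilon)$ by definition. Taking the supremum over $I$ yields the inequality.

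With these two facts in hand, for arbitrary $x > 0$ I would set $n = \lceil x/\epsilon \rceil$, so that $x \le n\epsilon$ and $n \le x/\epsilon + 1$. Then monotonicity gives $\xi_p(x) \le \xi_p(n\epsilon)$, and the sub-additivity bound combined with the estimate on $n$ yields
\begin{equation}
\xi_p(x) \le n\,\xi_p(\epsilon) \le \Bigl(\tfrac{x}{\epsilon}+1\Bigr)\xi_p(\epsilon) = \tfrac{\xi_p(\epsilon)}{\epsilon}\,x + \xi_p(\epsilon),
\end{equation}
which is \eqref{eq:xibound0}. The re-expression \eqref{eq:xibound} is then purely notational: multiplying and dividing by $\sigma$ and invoking the definitions \eqref{eq:aanddelta} of $a(\epsilon)$ and $\delta(\epsilon)$ produces the stated form.

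There is essentially no hard step here; the entire content is the ceiling argument, and the only place one must be careful is ensuring that the partition into $n$ sub-intervals in the sub-additivity step is taken with non-overlapping intervals (so that probability weights add rather than double-counting). Any boundary coincidences are harmless provided one uses either all half-open or all closed sub-intervals consistently, since the supremum in Definition~\ref{definition-xi} is unaffected by a measure-zero boundary.
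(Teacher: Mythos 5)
Your proof is correct and follows essentially the same route as the paper's: monotonicity of $\xi_p$, the sub-additivity bound $\xi_p(n\epsilon)\le n\,\xi_p(\epsilon)$ obtained by partitioning an interval of length $n\epsilon$ into $n$ pieces, and a ceiling/integer-bracketing argument giving $n\le x/\epsilon + 1$. The only difference is cosmetic (the paper brackets $x$ as $(n-1)\epsilon \le x < n\epsilon$ rather than taking $n=\lceil x/\epsilon\rceil$), and your remark about boundary overlaps is a harmless refinement since double-counting endpoints only strengthens the inequality.
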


\begin{proof}
Take $ \left( n -1 \right) \epsilon \le x < n \epsilon$, with $n \ge 1$ a natural number. The function $\xi_p$ is non-decreasing, hence 
$\xi_p \left( x \right) \le \xi_p \left( n \epsilon \right)$.
Since $\xi_p(\epsilon)$ quantifies the maximum probability that can fit \emph{any} interval $\epsilon$, we also have that $\xi_p(n \epsilon) \le n \xi_p(\epsilon)$, which results in
\begin{align}
\xi_p \left( x \right) &\le \xi_p \left( n \epsilon \right) \le (n-1) \xi_p( \epsilon) + \xi_p(\epsilon) \nonumber \\
&\le \frac{\xi_p(\epsilon)}{\epsilon} x + \xi_p(\epsilon).
\end{align}
\end{proof}

We now derive some general properties of $\xi_p$. 
For many distributions $p$ we would expect to be able to find an $\epsilon$ such that  $a(\epsilon) \sim 1 $ (in terms of its approximate order of magnitude)  and $\delta(\epsilon) \ll 1$.
To visualize how this can be so, consider the case in which the distribution has essentially a single ``peak'', and that the standard deviation $\sigma$ approximately quantifies the width of this peak. 
In such a case, a rough estimate for the maximum probability that can fit inside an interval $\epsilon$ can be given by
\begin{equation}
\label{eq:xiestimate}
\xi_p(\epsilon) \sim \frac{\epsilon}{\sigma}.
\end{equation}
With this estimate we indeed get $a(\epsilon) \sim 1$. Figure~\ref{fig:discretedistr} illustrates this for the case of a binomial distribution, where $0.2<a(\epsilon)<0.8$ for all $\epsilon>\frac{1}{2} $. 

\begin{figure}
     \centering
     \includegraphics[width=0.45\textwidth]{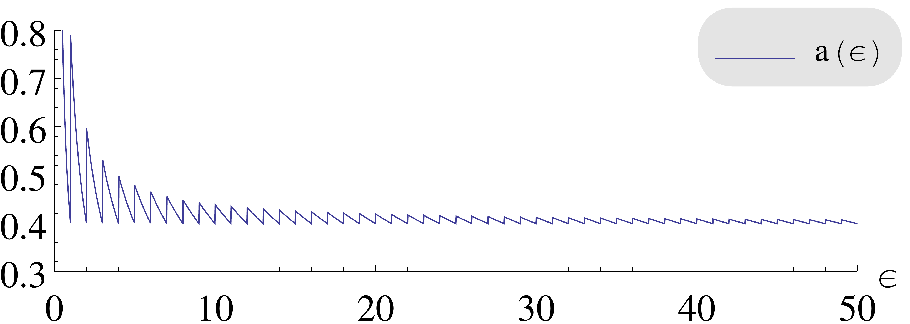}
     \includegraphics[width=0.45\textwidth]{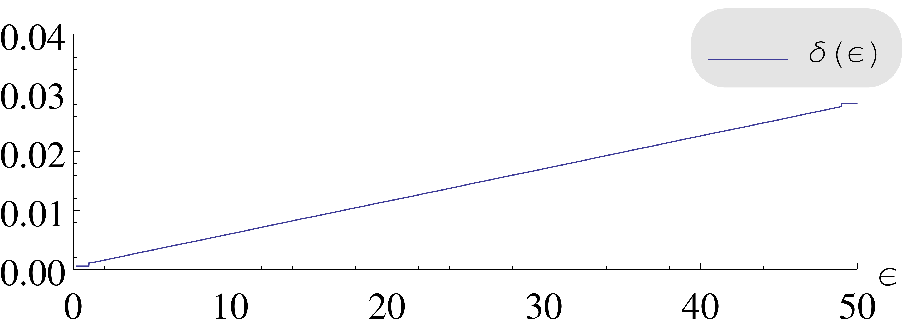}
     \caption{$a$ and $\delta$ as functions of $\epsilon$ for a binomial distribution of  $2 \times 10^6$ randomly chosen bits (mean $10^6$, standard deviation $\approx 707$).}
     \label{fig:discretedistr}
\end{figure} 

In general the above will work when the distribution $p$ is approximately unimodal, i.e. characterised by a single distinct peak. If, on the contrary, the distribution was composed of two or more peaks the estimate in equation~(\ref{eq:xiestimate}) might not hold, as Figure~\ref{fig:unimodalvsbimodal} exemplifies.

\begin{figure}
     \centering
     \includegraphics[width=0.5\textwidth]{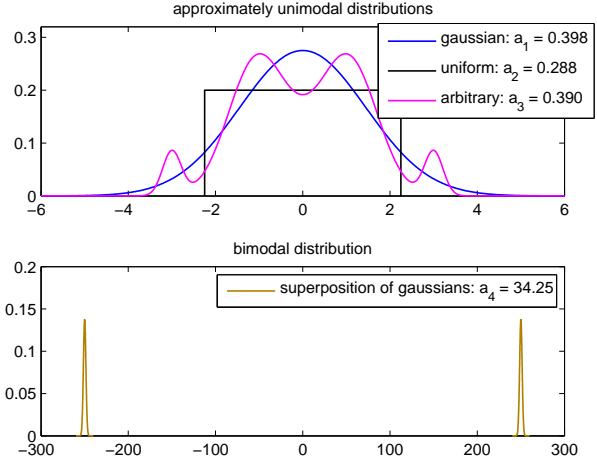}
     \caption{Examples of approximately unimodal distributions (above), and bimodal distribution (below), for continuous distributions in the limit $\epsilon \rightarrow 0$. The bimodal distribution can violate the estimate $\xi_p(\epsilon) \sim \frac{\epsilon}{\sigma}$, simply because one can make the standard deviation arbitrarily large by placing the peaks further apart without changing the actual value of $\xi_p(\epsilon)$.}
     \label{fig:unimodalvsbimodal}
\end{figure} 

When (\ref{eq:xiestimate}) holds, taking $\epsilon \ll \sigma$ is also enough to ensure $\delta(\epsilon) \ll 1$. Note that in Figure 1, $a(\epsilon)$ diverges for small $\epsilon$. To avoid such behaviour, we would typically want to choose $\epsilon$ larger than the gaps between consecutive values of the variable. Overall, we would expect to be able to find an $\epsilon$ satisfying both  $a(\epsilon) \sim 1 $ and $\delta(\epsilon) \ll 1$ if the distribution is approximately unimodal and 
spread over many different values of the random variable. 

In our particular case, Proposition~\ref{proposition-xibound} refers to the distribution $p_\alpha = p_{(j,k)} = \frac{|\rho_{jk}| |A_{kj}|}{Q \|A\|}$, which depends strongly on the distribution of energy gaps of the system. For large systems with typical energy ranges (e.g. finite positive temperatures), their energy levels tend to be more densely packed for larger energies, which leads to a much larger concentration of small gaps than large gaps. For most $A$ and $\rho_0$ we would therefore expect the distribution $p_\alpha$ to be more peaked towards the center and decay for larger values of the energy gaps $G$, leading to an approximately unimodal distribution over a dense spectrum as considered above. Nevertheless, this will not always be the case, as we will discuss in Section~\ref{sec:unimodalityassumption}.

\section{Observable dependent \protect\\ time scale bound}
\label{sec:generaltheorem}

Propositions~\ref{proposition-xibound} and~\ref{proposition-xidependence} lead to the following result.
\begin{theorem}[Observable dependent bound]
    \label{theorem-genboundtime}
Given an initial state $\rho_0$, observable $A$, Hamiltonian $H$, and any $\epsilon > 0$, the time averaged weak-distinguishability satisfies
    \begin{align}
        \label{eq3:genboundtime}
        \left\langle \mathcal{\widetilde D}_A(\rho_t,\omega) \right\rangle_T
& \le \pi \, Q^2 \left( \frac{a(\epsilon)}{\sigma_G T}  + \delta(\epsilon) \right) \nonumber\\
       & \le \frac{ \pi \, a(\epsilon) {\|A\|}^{1/2}  Q^{5/2}}{T \sqrt{\Big| \textnormal{Tr}\Big( \big[ \left[\rho_0,H \right] , H\big] A \Big) \Big|}} + c\, \delta(\epsilon) Q^2,
    \end{align}
    where
    \begin{align}
        \label{eq3:genboundtime-Qfactor}
        Q^2 \le d \tr{ \rho_0^2 },
    \end{align}
 $\sigma_G$ is the standard deviation of energy gaps $G_{\alpha}$ for the distribution $p_{\alpha}$, $a(\epsilon)$ and $\delta(\epsilon)$ are as in Proposition~\ref{proposition-xidependence}, 
 and $d$ is the rank of $\omega$. 
\end{theorem}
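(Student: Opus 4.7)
The plan is to combine the two propositions already established, together with an elementary computation of the variance of the distribution $p_\alpha$ in terms of the double commutator $[[\rho_0,H],H]$, and a Cauchy--Schwarz estimate for $Q$.

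\emph{First inequality.} I would simply chain Proposition~\ref{proposition-xibound} and Proposition~\ref{proposition-xidependence}. The former gives $\left\langle \mathcal{\widetilde D}_A(\rho_t,\omega)\right\rangle_T \le \pi Q^2 \, \xi_p(1/T)$, and the latter specialises $\xi_p(1/T) \le a(\epsilon)/(\sigma_G T) + \delta(\epsilon)$ with $\sigma = \sigma_G$, the standard deviation of the gap distribution $p_\alpha$. This yields the first bound with no further work.

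\emph{Second inequality via a double commutator identity.} The key observation is that $p_\alpha$ is symmetric under $\alpha=(j,k)\mapsto (k,j)$ because $\rho_0$ and $A$ are Hermitian, so $|v_{(j,k)}|=|v_{(k,j)}|$ while $G_{(j,k)}=-G_{(k,j)}$. Hence the mean of $G_\alpha$ under $p_\alpha$ vanishes and
\begin{equation*}
\sigma_G^2 \;=\; \sum_\alpha p_\alpha\, G_\alpha^2 \;=\; \frac{1}{Q}\sum_\alpha |v_\alpha|\, G_\alpha^2 \;\ge\; \frac{1}{Q}\left|\sum_\alpha v_\alpha G_\alpha^2\right|.
\end{equation*}
Next I would compute, in the energy basis, $[[\rho_0,H],H]_{jk}=(E_j-E_k)^2\rho_{jk}$, so that
\begin{equation*}
\operatorname{Tr}\bigl([[\rho_0,H],H]\,A\bigr) \;=\; \sum_{j,k}(E_j-E_k)^2\rho_{jk}A_{kj} \;=\; \|A\|\sum_\alpha G_\alpha^2 v_\alpha,
\end{equation*}
where only $E_j\neq E_k$ terms survive. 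Combining these two lines gives the lower bound $\sigma_G^2 \ge |\operatorname{Tr}([[\rho_0,H],H]A)|/(\|A\|\,Q)$, and substituting $1/\sigma_G$ into the first inequality delivers the second bound with $c=\pi$ (only the first term is affected; the $\delta(\epsilon)$ term is untouched).

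\emph{Bound on $Q$.} Since $\omega$ has rank $d$ and $\omega_{jj}=\rho_{jj}$, positivity of $\rho_0$ forces $\rho_{jk}=0$ whenever $j$ or $k$ lies outside the support of $\omega$. Letting $P$ be the projector onto that $d$-dimensional support, Cauchy--Schwarz applied to the sum defining $Q$ gives
\begin{equation*}
Q^2 \;\le\; \Bigl(\textstyle\sum_{j,k}|\rho_{jk}|^2\Bigr)\,\Bigl(\textstyle\sum_{j,k \in \mathrm{supp}(\omega)}|A_{kj}|^2/\|A\|^2\Bigr) \;\le\; \operatorname{Tr}(\rho_0^2)\,\cdot\,\operatorname{Tr}\bigl((PAP)^2\bigr)/\|A\|^2,
\end{equation*}
and since $PAP$ has rank at most $d$ and spectral norm at most $\|A\|$, its Hilbert--Schmidt norm squared is at most $d\,\|A\|^2$, giving $Q^2\le d\,\operatorname{Tr}(\rho_0^2)$.

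\emph{Expected obstacle.} The routine steps are clear; the only subtle point is the symmetry argument that lets us identify $\sigma_G^2$ with the (unsigned) second moment, together with the sign juggling in the double commutator identity. Getting the exponents of $Q$ and $\|A\|$ to match the claimed $Q^{5/2}\|A\|^{1/2}$ scaling requires keeping careful track of the factor $\|A\|$ that was used to normalise $v_\alpha$; this is the piece I would verify most carefully, but I do not anticipate any genuine difficulty beyond bookkeeping.
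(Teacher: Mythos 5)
Your proposal is correct and follows essentially the same route as the paper: chaining Propositions~\ref{proposition-xibound} and~\ref{proposition-xidependence}, using the $(j,k)\leftrightarrow(k,j)$ symmetry to identify $\sigma_G^2$ with the second moment and lower-bound it by $\left|\operatorname{Tr}\bigl([[\rho_0,H],H]A\bigr)\right|/(Q\|A\|)$, and bounding $Q^2$ via Cauchy--Schwarz restricted to the support of $\omega$. The only cosmetic difference is the final estimate $\operatorname{Tr}\bigl((PAP)^2\bigr)\le d\|A\|^2$, which you obtain from rank times spectral norm squared while the paper uses $\operatorname{Tr}[\Pi_\omega A\Pi_\omega A]\le\operatorname{Tr}[\Pi_\omega A^2]\le\operatorname{Tr}[\Pi_\omega]\,\|A\|^2$; both are valid.
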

\begin{proof}
Since the distribution $p_\alpha$ is symmetric with respect to interchanging the indices $\left\{j,k\right\}$ while $G_\alpha$ is antisymmetric, we get that its variance, denoted by $\sigma_G^2$, satisfies 
\begin{align}
  \label{eq:app-sigma}
  \sigma_G^2 &=\sum_\alpha p_\alpha G_\alpha^2 - \left( \sum_\alpha p_\alpha G_\alpha \right)^2
               =\sum_\alpha p_\alpha G_\alpha^2 \notag\\
             &= \sum_{jk} \frac{|\rho_{jk}|  |A_{kj}|}{Q\|A\|} (E_j-E_k)^2 \notag\\
             &\ge \frac{1}{Q\|A\|}\Big| \sum_{jk} \rho_{jk} A_{kj} (E_j-E_k)^2 \Big| \notag\\
             &= \frac{1}{Q\|A\|}\Big| \text{Tr}\Big(\big[ \left[\rho_0 ,H \right] , H\big] A \Big) \Big|. 
\end{align}
Notice that for a local Hamiltonian and observable, and a known initial state, this expression (combined with the bound for $Q$ which soon follows) is much simpler to compute than $\sigma_G$, since it does not require detailed knowledge of the Hamiltonian's spectrum and eigenbasis, which is needed in order to construct the distribution $p_\alpha$ in the first place.

Moreover, we find an upper bound for $Q$,
\begin{align}
\label{eq:app-Qbound}
Q^2 &= \Bigg( \sum_{jk: E_j \ne E_k} |\rho_{jk}| \frac{|A_{kj}|}{\|A\|} \Bigg)^2 \nonumber \\
&\le \Bigg( \sum_{jk} |\rho_{jk}| \frac{|A_{kj}|}{\|A\|} \Bigg)^2 \nonumber \\
 &\le \Bigg( \sum_{jk} |\rho_{jk} |^2 \Bigg) \Bigg( \sum_{j'k' \in \ \text{supp}(\omega)} \frac{|A_{j'k'}|^2}{\|A\|^2} \Bigg) \nonumber \\
 &= \tr{\rho_0^2} \frac{\tr{\Pi_{\omega} A \Pi_{\omega} A}}{\|A\|^2} \nonumber \\
 &\le \tr{\rho_0^2} \tr{\Pi_{\omega}}  = d \tr{\rho_0^2} ,
 \end{align}
where $\Pi_{\omega}$ projects onto all the energy levels of the Hamiltonian which occur with non-zero probability in $\rho_0$  (this is given by the support of $\omega$). In the third line we restrict to this set of energy levels and use the Cauchy-Schwarz inequality. For the last line, notice that  by the Cauchy-Schwarz inequality for the scalar product $\left( O , P \right) \equiv \tr{O P^\dag}$ we can bound $\tr{\Pi_{\omega} A \Pi_{\omega} A} \le  \tr{\Pi_{\omega} A^2}$. Then using that for any two positive semidefinite matrices $\tr{O P} \le \|O\| \tr{P}$ we find $\tr{\Pi_{\omega} A^2}~\le~\tr{\Pi_{\omega}}~\|A^2\|~=~d \|A\|^2$.

Inserting the above into  eqs.~(\ref{eq:genbound}) and~(\ref{eq:xibound}) 
proves our claim.
\end{proof}
    
If the second term on the right hand side of equation~(\ref{eq3:genboundtime}) is small, the system will eventually equilibrate with respect to $A$. 
The time dependence is determined by the first term.
In particular, the system will be equilibrated (in the sense described in 
Section~\ref{sec:main result}) for times $T \gg T_{\text{eq}}$, where 
\begin{equation} 
 T_{\text{eq}} \equiv \frac{\pi \, a(\epsilon) {\|A\|}^{1/2} Q^{5/2}}{\sqrt{\Big| \text{Tr}\Big( \big[ \left[\rho_0,H \right] , H\big] A \Big) \Big|}}. \label{eq3:genboundtimeestimate}
\end{equation}

It is interesting to note the dependence of the above expression on $\big[ [\rho_0,H], H \big]$, which is, up to a minus sign, the second time derivative of the state at $t=0$. Therefore $T_{\text{eq}}$ can be alternatively written as
\begin{equation}
T_{\text{eq}} = \frac{ \pi \, a(\epsilon) {\|A\|}^{1/2} Q^{5/2}} {\sqrt{\Big|   \tr{ \left. \frac{d^2 \rho_t}{dt^2} \right|_{t=0} A } \Big|}}.
\end{equation}
Remarkably, the denominator of this expression is what one would expect from a Taylor expansion of the distance for short times, assuming the system is initially as far from equilibrium as possible (then the first derivative term is $0$ and one is left with the second derivative as leading order). 

We argued earlier that we would typically expect $a(\epsilon) \sim 1$ and $\delta(\epsilon) \ll 1$. However, we still have to address the size of the bound for $Q$ given by equation~(\ref{eq3:genboundtime-Qfactor}), which could greatly influence the speed of equilibration.
Notice that in general the dimension $d$ of the Hilbert space is extremely large, since it scales exponentially with the number of constituents of the closed system being considered.
Therefore, in order for this bound to show rapid equilibration we would need a very mixed initial state,  spread over a significant fraction of the  Hilbert space. 

Moreover, the constant $Q$ appears in the second term in Theorem~\ref{theorem-genboundtime}, along with $\delta(\epsilon)$. In order to show equilibration at all this second term needs to be small too.

In the next section we consider an important physical scenario and then use our bound to show reasonably fast equilibration.

\section{System interacting with a
bath}
\label{sec:microcanonical}

We now turn to the paradigmatic case of a small system interacting with a large thermal bath. This situation corresponds to decomposing the closed system considered in the previous sections into a small system $\mathcal{S}$ and a bath $\mathcal{B}$.
By assuming the observable $A$ to be of the form
\begin{equation}
A = A_S \otimes \id_B,
\end{equation}
where $A_S$ acts on the system
and $\id_B$ is the identity acting on the bath, one can focus on the system's behaviour.

The total Hamiltonian is denoted by
\begin{align}
H = H_S + H_B + H_I,
\end{align}
where $H_S$ and $H_B$ are the system and bath Hamiltonians, and $H_I$ denotes the interaction between them.

We assume that the system $\mathcal{S}$ is initially in an arbitrary state $\rho_S$, and for simplicity not correlated with the initial state of bath $\rho_B$, that is
$\rho_0 = \rho_S\otimes\rho_B$,
corresponding to a system initially isolated which is suddenly allowed to interact with $\mathcal{B}$ via $H_I$.

To show that such a situation can lead to a small value for $Q$, we first consider the case in which the bath is in a maximally mixed state, with 
\begin{equation} 
\rho_0 = \rho_S \otimes \frac{\id_B}{d_B}.  
\end{equation}
In this case, it is easy to see from equation (\ref{eq3:genboundtime-Qfactor}) that 
\begin{equation} \label{eq:simpleQbound} 
Q\leq d \tr{\rho_0^2} \leq (d_S d_B)\trs{\rho_S^2} \trb{\frac{\id^2_B}{d_B^2}} = d_S \trs{\rho_S^2}.
\end{equation} 

The remainder of this section corresponds to extending this simple example (which could be understood as a system interacting with an infinite temperature bath) to the more physical case of a system interacting with a finite temperature bath. 

In what follows, given a Hamiltonian $H$ we denote an energy window of width $\Delta$ centered around an energy
$E$ in terms of its corresponding Hilbert space
$\mathcal{H}^{E,\Delta}_{H}$, defined as
\begin{equation}
    \label{eq:1x}
\mathcal{H}^{E,\Delta}_{H} = \spn \left\{ \ket{E_n} : E-\frac{\Delta}{2} \le E_n \le E+\frac{\Delta}{2} \right\} .
\end{equation}

We will consider the state of the bath from the microcanonical ensemble viewpoint.
Consequently, we consider an  energy window of the bath Hamiltonian of width $\Delta$ centered around $E_B$. The subspace that this defines, $\mathcal{H}^{E_B,\Delta}_{H_B}$, will be referred to
as a \emph{microcanonical window}, and its dimension will be denoted by $d_B^\Delta$. The initial state of the bath is then $\rho_B = \frac{\id_B^\Delta}{d_B^\Delta}$, and the initial state of the system plus bath is
\begin{equation}
\rho_0 = \rho_S\otimes\frac{\id_B^\Delta}{d_B^\Delta},
\end{equation}
which we will use from now on.

The width of the microcanonical window is to be taken large enough such that it contains many
energy levels, in particular many more than the dimension of the
system, yet small in comparison to the whole spectrum of the bath
Hamiltonian $H_B$. 

\subsection{Truncation of the Hilbert space}
\label{sec:truncation}

Notice that the state $\rho_0 = \rho_S\otimes\frac{\id_B^\Delta}{d_B^\Delta}$, corresponding to a bath in the microcanonical ensemble, is quite mixed. This is good news for our bound $Q^2 \le d \tr{\rho_0^2}$ given by equation~(\ref{eq3:genboundtime-Qfactor}),
 since the purity of the state
will be a small number. However,
the presence of the dimension of the Hilbert space implies that the bound for $Q$ could still be extremely large.
In this section we show a truncation method for the state and the Hilbert space
 which allows us to reduce the relevant dimension significantly.

As the eigenvalues of $H_S$ lie between $-\|H_S\|$ and $\|H_S\|$, the initial state $\rho_0$ is contained
inside an energy window of width $\Delta + 2 \|H_S\|$ of the unperturbed Hamiltonian $H_S + H_B$.  
That is, $\rho_0$ lies within the subspace $\mathcal{H}^{E_B,\Delta+2\|H_S\|}_{H_B + H_S}$.

However, this no longer holds when considering the full Hamiltonian $H = H_S + H_B + H_I$; in principle $\rho_0$
can have support outside the subspace $\mathcal{H}^{E_B,\Delta+2\|H_S\|}_{H_B + H_S + H_I}$. 
Yet, one has the intuition that if the interaction
is small compared to the unperturbed Hamiltonian the energy window where the state is supported should not grow significantly.
This would imply that one does not need to
consider the full Hilbert space, but rather a truncated subspace
corresponding to a window somewhat larger than the original $\mathcal{H}^{E_B,\Delta+2\|H_S\|}_{H_B + H_S}$.

The above reasoning is proved correct in Appendix~\ref{app:truncation-errorsmc}, where we show that the trace distance between the state $\rho_0$ and a truncated state $\Pi \rho_0 \Pi $ is small, where $\Pi$ is a projector onto the truncated subspace. More precisely, 
we find the following.
\begin{proposition}[Hilbert space truncation]
    \label{proposition-truncationmc}
    For any $K$, the state $\rho_0 = \rho_S\otimes\frac{\id_B^\Delta}{d_B^\Delta}$ which lies within $\mathcal{H}^{E_B,\Delta+2\|H_S\|}_{H_B + H_S}$ can be truncated to the state $\Pi \rho_0 \Pi$, with 
    \begin{align}
    \left\| \rho_0 - \Pi \rho_0 \Pi \right\|_1 & \le  \frac{2}{K}, \label{eq:truncateddistance}
    \end{align}
        where $\Pi$ projects onto the subspace $\mathcal{H}^{E_B,\Delta + 2\| H_S \| + \eta}_{H_B + H_S + H_I}$ with a width extended by $\eta = \sqrt{8 d_S} \|H_I\| K$.
\end{proposition}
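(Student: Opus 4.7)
My plan is to bound the ``leakage probability'' $q \equiv \tr{(\id-\Pi)\rho_0}$ first, and then convert this into a trace-distance bound via the gentle-measurement-type inequality $\|\rho_0-\Pi\rho_0\Pi\|_1 \le 2\sqrt{q}$. The latter can be established by starting from the decomposition $\rho_0 - \Pi\rho_0\Pi = (\id-\Pi)\rho_0 + \Pi\rho_0(\id-\Pi)$, splitting $\rho_0^{1/2}$ between the two factors, and applying the trace-norm H\"older inequality $\|AB\|_1\le \|A\|_2\|B\|_2$ twice; this gives $\sqrt{q}+\sqrt{p q}\le 2\sqrt{q}$ with $p=1-q$. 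It will thus suffice to prove $q \le 1/(2K^2)$, since $2\sqrt{q}\le \sqrt{2}/K\le 2/K$.

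\textbf{Majorisation by a small projector.} I will first observe that, because $\rho_S\le\id_S$, we have $\rho_0 = \rho_S\otimes \id_B^\Delta/d_B^\Delta \le P_\Delta/d_B^\Delta$, where $P_\Delta \equiv \id_S\otimes \id_B^\Delta$ is the projector onto $\mathcal{H}_S\otimes \mathcal{H}_B^{E_B,\Delta}$. Since $\id-\Pi$ is positive, this yields $q \le \tr{(\id-\Pi)P_\Delta}/d_B^\Delta$.

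\textbf{Overlap bound.} Next I would pick an orthonormal basis $\{|e_\alpha\rangle = |s_k\rangle\otimes|b_j\rangle\}$ of $\operatorname{range}(P_\Delta)$, with $|s_k\rangle$ an $H_S$-eigenstate ($|\epsilon_k^S|\le\|H_S\|$) and $|b_j\rangle$ an $H_B$-eigenstate in the microcanonical window ($|\epsilon_j^B-E_B|\le\Delta/2$). Each $|e_\alpha\rangle$ is then an eigenstate of $H_0=H_S+H_B$ with eigenvalue $E_\alpha^0$ satisfying $|E_\alpha^0-E_B|\le \|H_S\|+\Delta/2 = (\Delta+2\|H_S\|)/2$. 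For any $H$-eigenstate $|n\rangle$ outside $\Pi$ the gap obeys $|E_n-E_\alpha^0|\ge \eta/2$, and the elementary identity $(E_n-E_\alpha^0)\langle n|e_\alpha\rangle = \langle n|(H-H_0)|e_\alpha\rangle = \langle n|H_I|e_\alpha\rangle$ yields $|\langle n|e_\alpha\rangle|^2\le 4|\langle n|H_I|e_\alpha\rangle|^2/\eta^2$. Summing over all $n$ (enlarging from $n\notin\Pi$) and over $\alpha$,
\begin{equation*}
\tr{(\id-\Pi)P_\Delta} \le \frac{4}{\eta^2}\tr{P_\Delta H_I^2} \le \frac{4\|H_I\|^2}{\eta^2}\tr{P_\Delta} = \frac{4\|H_I\|^2 d_S d_B^\Delta}{\eta^2}.
\end{equation*}
Substituting $\eta = \sqrt{8 d_S}\|H_I\|K$ gives exactly $q \le 1/(2K^2)$, which closes the argument.

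\textbf{Main obstacle.} The delicate step is deciding to bound $\rho_0$ by the \emph{small} projector $P_\Delta$ (of rank $d_S d_B^\Delta$) rather than by the larger $H_0$-spectral projector onto $\mathcal{H}^{E_B,\Delta+2\|H_S\|}_{H_0}$, whose rank scales as $d_S d_B^{\Delta+2\|H_S\|}$. Using the larger one would inflate $\eta$ by a factor reflecting the bath's density of states in an extra window of width $2\|H_S\|$, which would defeat the point of obtaining a bath-size-independent truncation. The saving grace is that $\rho_0$ happens to sit \emph{exactly} inside $\operatorname{range}(P_\Delta)$, even though its natural $H_0$-spectral support is wider; this is what allows $\eta$, and hence the truncation error, to be independent of the bath's size and its inverse temperature.
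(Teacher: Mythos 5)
Your proof is correct and arrives at exactly the same constants; the overall skeleton (bound the leakage probability $1-\tr{\rho_0\Pi}$ by $\tfrac{1}{2K^2}$, then convert to a trace-distance bound via a gentle-measurement inequality) is the same as the paper's. Where you differ is in how the leakage bound is obtained. The paper expands $\rho_0$ in the product eigenbasis of $H_S+H_B$ including the off-diagonal system elements $\rho^S_{ll'}$, applies Cauchy--Schwarz to reduce to $\sum_{\lambda,l,l'} c_\lambda |\rho^S_{ll'}|\,\|\Pi^\bot\ket{E_l^S,E_\lambda^B}\|\,\|\Pi^\bot\ket{E_{l'}^S,E_\lambda^B}\|$, invokes Bhatia's perturbation theorem (Theorem VII.3.1 of Bhatia) to bound each $\|\Pi^\bot\ket{E_l^S,E_\lambda^B}\|\le 2\|H_I\|/\eta$, and finally controls the off-diagonal sum via $\sum_{l,l'}|\rho^S_{ll'}|\le d_S\sqrt{\trs{\rho_S^2}}\le d_S$. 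You sidestep both of these ingredients: the operator inequality $\rho_0\le P_\Delta/d_B^\Delta$ eliminates the off-diagonal structure of $\rho_S$ at the outset, and your overlap estimate $|\pint{n}{e_\alpha}|^2\le 4|\bra{n}H_I\ket{e_\alpha}|^2/\eta^2$ is an elementary eigenvalue-gap identity rather than a citation of Bhatia; summing it over $n$ to get $\tfrac{4}{\eta^2}\tr{P_\Delta H_I^2}\le \tfrac{4}{\eta^2}\|H_I\|^2 d_S d_B^\Delta$ is clean and self-contained. Your version is therefore somewhat more elementary and slightly tighter (you get $\sqrt{2}/K$ before relaxing to $2/K$), at the small cost of being specific to the maximally-mixed-on-a-window form of $\rho_B$, whereas the paper's calculation is set up for an arbitrary bath state commuting with $H_B$ (arbitrary positive normalized $c_\lambda$); your argument extends to that case too, since any such state still satisfies $\rho_B\le \id_B^\Delta\max_\lambda c_\lambda$ only with a worse constant, but for the state actually used in the Proposition your route is perfectly adequate. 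Your closing observation about why one must project with $P_\Delta$ rather than the wider $H_0$-spectral window is a fair point, though the paper's term-by-term treatment never faces that choice explicitly.
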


As a straightforward corollary one obtains that 
\begin{align}
\mathcal{\widetilde D}_A\left( \rho_t, \Pi \rho_t  \Pi \right) & \le  \frac{1}{K^2}, \label{eq:truncateddistinguishability}
\end{align}
where $\rho_t  = e^{-iHt} \rho_0 e^{iHt}$ is the evolved state. This shows that,  as long as we take $K$ large enough, the two states give similar evolutions.

This truncation procedure will be particularly useful to us, since the dimension of the accessible Hilbert space $\mathcal{H}^{E_B,\Delta +2 \|H_S\| + \eta}_{H_B + H_S + H_I}$ is in general much smaller than the full 
 dimension. 
 
We also find in Appendix~\ref{app:truncation-errorsmc} that, if the density of states of the bath is denoted by $\nu_B(E)$, the dimension of the truncated state, $d_\text{trunc} = \rank{\left(\Pi\right)}$, satisfies
\begin{align}
d_\text{trunc} \le \frac{d_S}{1-\tfrac{1}{K}} \int_{E_B - \tfrac{\Delta}{2} -  \left( 1 + \sqrt{2d_s} \right) K \|H_I\| -\|H_S\|  }^{E_B + \tfrac{\Delta}{2} +\left( 1 + \sqrt{2d_s} \right) K \|H_I\| + \|H_S\|  } \nu_B(E) dE. 
\end{align}
Meanwhile, the dimension of the (unperturbed) microcanonical window of the bath is given by
\begin{align}
d_B^\Delta = \int_{E_B - \tfrac{\Delta}{2}}^{E_B + \tfrac{\Delta}{2}} \nu_B(E) dE.
\end{align}

Typically, thermal baths have a (coarse grained) density of states which grows approximately exponentially with energy. Thus, if we take
\begin{equation}
\nu_B(E) = \mathcal{N} e^{\beta E},
\end{equation}
where $\beta$ is the inverse temperature and $\mathcal{N}$ a normalization constant, it is easy to obtain
\begin{align}
\label{eq:truncateddimension}
&d_\text{trunc} \le \\
&  \frac{d_S d_B^\Delta}{\left(1-\tfrac{1}{K} \right)} \frac{ \sinh{ \Big[ \beta \frac{\Delta}{2} +  \beta \|H_S\| +  \big( 1 + \sqrt{2d_s} \big) K \beta \|H_I\| \Big] }}{ \sinh{ \big[ \beta \frac{\Delta}{2} \big] } } \nonumber \\
& \quad \le \frac{d_S d_B^\Delta}{\left(1-\tfrac{1}{K}\right)\left(1-e^{-\beta \Delta} \right)} e^{ \beta \|H_S\| +  \big( 1 + \sqrt{2d_s} \big) K \beta \|H_I\| }. \nonumber
\end{align}
Note that, given that the energy width of the microcanonical window grows as the number of constituents of the bath increases, in general $\beta \Delta \gg 1$ holds for a large enough bath, in which case the last inequality is a particularly good approximation.

\subsection{Time scales for a system in contact with
a bath}
\label{sec:subsys-equil-mc}

Proposition~\ref{proposition-truncationmc}  allows us to truncate the microcanonical state $\rho_0$ to $\Pi \rho_0 \Pi$, since the error introduced is small. This greatly reduces the dimension of the relevant Hilbert space, and consequently the corresponding bound for the constant $Q$ in Theorem~\ref{theorem-genboundtime}.

However, this reasoning would also lead us to use the truncated state in the theorem itself, which would cause  the  replacement of $\text{Tr}\Big( \big[ \left[\rho_0,H \right] , H\big] A \Big)$ by $\text{Tr}\Big( \big[ \left[\Pi \rho_0\Pi,H \right] , H\big] A \Big)$ in equation~(\ref{eq3:genboundtime}). This not only introduces additional complexity but could possibly significantly weaken the bound.
Moreover, even if the Hamiltonian involved nearest neighbour-type
interactions, $\Pi$ could be highly non-local and indeed we may have no
way of computing it.
Nevertheless, we prove in Appendix~\ref{app:truncation-commutators} that the time average of the weak-distinguishability can be bounded with a commutator involving the original state $\rho_0$ instead of the truncated one, while still having a relevant Hilbert space with much smaller dimension than the original space. 

We finally have all the ingredients to apply Theorem~\ref{theorem-genboundtime} to the case of a system in contact with a thermal bath, which turns into the following.
\begin{theorem}[Bound for system interacting with thermal bath]
\label{theorem-boundmc}
For any $\epsilon>0,K > 0$, observable $A$, Hamiltonian $H = H_B + H_S + H_I$, and initial state $\rho_0 = \rho_S \otimes \frac{\id_B^\Delta}{d_B^\Delta}$, 
the weak-distinguishability satisfies
\begin{align}
\label{eq:boundmc}
      \left\langle \mathcal{\widetilde D}_{A} \left( \rho_t ,\omega  \right) \right\rangle_T
&\le \frac{\pi \, a(\epsilon) \|A\|^{1/2}  Q_2^{5/2} }{T \sqrt{\Big| \textnormal{Tr}\Big( \big[ \left[\rho_0,H \right] , H \big] A\Big) \Big|}} \nonumber \\
      &\quad + \pi\, \delta(\epsilon) Q_2^2  + \frac{18 }{K^2}, 
    \end{align}
    where $a(\epsilon)$ and $\delta(\epsilon)$ are as in Proposition~\ref{proposition-xidependence}, and
\begin{align}
	Q_2 &\le 				\sqrt{	\tr{ \rho_0^2} d_\textnormal{trunc}} + \frac{2}{K}, \\
d_\textnormal{trunc} &\le \frac{d_S}{1-\tfrac{1}{K}} \int_{E_B - \tfrac{\Delta}{2} -  \left( 1 + \sqrt{2d_s} \right) K \|H_I\| -\|H_S\|  }^{E_B + \tfrac{\Delta}{2} +\left( 1 + \sqrt{2d_s} \right) K \|H_I\| + \|H_S\|  } \nu_B(E) dE. \nonumber
\end{align}
where $ \nu_B(E)$ is the density of states of the bath.  Moreover, if we take the density of states of the bath to be $ \nu_B(E) \propto e^{\beta E}$ (as we would expect for a thermal bath in the vicinity of the microcanonical window), we obtain from eq.~(\ref{eq:truncateddimension}) that
\begin{align}
        \label{eq:boundmc-Qfactor}
        Q_{2} &\le   \sqrt{\frac{d_S \trs{\rho_S^2}\, e^{ \beta \|H_S\| +  \big( 1 + \sqrt{2d_s} \big) K \beta \|H_I\| }}{\left(1-\tfrac{1}{K}\right)\left(1-e^{-\beta \Delta} \right)}}+ \frac{2}{K}.
\end{align}
\end{theorem}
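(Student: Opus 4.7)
The bound $Q^2 \le d\,\tr{\rho_0^2}$ obtained in Theorem~\ref{theorem-genboundtime} carries the full Hilbert space dimension $d = d_S d_B$, which is useless when the bath is large. The plan is therefore to apply Theorem~\ref{theorem-genboundtime} not to $\rho_0$ itself but to the truncated state $\tilde\rho_0 = \Pi\rho_0\Pi$ supplied by Proposition~\ref{proposition-truncationmc}, whose support has the much smaller dimension $d_\textnormal{trunc}$. Since $\|\rho_0-\tilde\rho_0\|_1 \le 2/K$, and both unitary evolution and the infinite-time average are trace-norm contractions, I would first deduce that $\|\rho_t-\tilde\rho_t\|_1 \le 2/K$ and $\|\omega-\tilde\omega\|_1 \le 2/K$, where $\tilde\rho_t = e^{-iHt}\tilde\rho_0\,e^{iHt}$ and $\tilde\omega = \langle\tilde\rho_t\rangle_\infty$. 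These two bounds immediately yield $\mathcal{\widetilde D}_A(\rho_t,\tilde\rho_t),\, \mathcal{\widetilde D}_A(\omega,\tilde\omega) \le 1/K^2$ via $|\tr{(\rho-\sigma)A}| \le \|\rho-\sigma\|_1 \|A\|$.

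Next I would apply Theorem~\ref{theorem-genboundtime} directly to $\tilde\rho_0$, obtaining a bound on $\langle\mathcal{\widetilde D}_A(\tilde\rho_t,\tilde\omega)\rangle_T$ whose $Q$-factor is controlled by $\sqrt{d_\textnormal{trunc}\,\tr{\tilde\rho_0^2}} \le \sqrt{d_\textnormal{trunc}\,\tr{\rho_0^2}}$, where the second inequality uses the contractivity $\tr{(\Pi\rho_0\Pi)^2}\le\tr{\rho_0^2}$. To propagate back to the original $\rho_t$ and $\omega$, I would exploit that $\sqrt{\mathcal{\widetilde D}_A}$ inherits the triangle inequality from the seminorm $\rho\mapsto|\tr{\rho A}|/(2\|A\|)$, giving
\begin{equation}
\sqrt{\mathcal{\widetilde D}_A(\rho_t,\omega)} \le \sqrt{\mathcal{\widetilde D}_A(\tilde\rho_t,\tilde\omega)} + \sqrt{\mathcal{\widetilde D}_A(\rho_t,\tilde\rho_t)} + \sqrt{\mathcal{\widetilde D}_A(\omega,\tilde\omega)}.
\end{equation}
Squaring via $(a+b+c)^2 \le 3(a^2+b^2+c^2)$, bounding the last two summands by $1/K^2$, and taking the time average then yields $\langle\mathcal{\widetilde D}_A(\rho_t,\omega)\rangle_T \le 3\langle\mathcal{\widetilde D}_A(\tilde\rho_t,\tilde\omega)\rangle_T + 6/K^2$; the remaining constants will be absorbed into the final $18/K^2$ once the commutator-substitution error below is tallied.

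The hard part will be replacing the truncated commutator $\tr{[[\tilde\rho_0,H],H]A}$ that naturally appears in the bound supplied by Theorem~\ref{theorem-genboundtime} by $\tr{[[\rho_0,H],H]A}$. This substitution is essential because the projector $\Pi$ is highly non-local and not explicitly computable, whereas the commutator involving the original $\rho_0$ is directly accessible from local Hamiltonian data; this is the content of Appendix~\ref{app:truncation-commutators}. The plan is to write $\rho_0 = \tilde\rho_0 + (\rho_0-\tilde\rho_0)$, expand the double commutator, and bound each error term via $\|\rho_0-\tilde\rho_0\|_1 \le 2/K$ together with operator-norm bounds on $[H,\cdot]$ restricted to the relevant energy window $\mathcal{H}^{E_B,\Delta+2\|H_S\|+\eta}_{H_B+H_S+H_I}$; the residual $+2/K$ addend in the definition of $Q_2$ originates precisely from this replacement. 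Finally, to close the proof, I would substitute the bound on $d_\textnormal{trunc}$ from eq.~(\ref{eq:truncateddimension}), valid for a thermal density of states $\nu_B(E)\propto e^{\beta E}$, to obtain the explicit exponential form of $Q_2$ in eq.~(\ref{eq:boundmc-Qfactor}).
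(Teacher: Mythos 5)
Your overall scaffolding (truncate via Proposition~\ref{proposition-truncationmc}, control the truncation error by $O(1/K^2)$ terms via the triangle inequality, bound $Q_{\text{trunc}}^2\le \tr{\rho_0^2}\,d_{\text{trunc}}$, then insert the thermal density of states) matches the paper's strategy. But the step you yourself flag as "the hard part" -- replacing $\tr{\big[[\Pi\rho_0\Pi,H],H\big]A}$ by $\tr{\big[[\rho_0,H],H\big]A}$ -- is where your plan breaks down, and it is not a matter of bookkeeping. The commutator sits in the \emph{denominator} of the time-scale bound, so you need the truncated commutator to be bounded \emph{below} by the original one (up to controlled factors). Expanding $\rho_0=\Pi\rho_0\Pi+(\rho_0-\Pi\rho_0\Pi)$ and using H\"older gives only an additive error of order $\tfrac{1}{K}\|A\|\,\|H'\|^2$, where $\|H'\|\sim G_{\max}/2$ is half the spectral range of the full Hamiltonian (and even restricting to the truncated window leaves a width $\Delta+2\|H_S\|+\eta$ that is extensive in the bath). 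This error is not small compared with the $O(1)$ quantity $\big|\tr{\big[[\rho_0,H],H\big]A}\big|$, and it can have either sign, so the truncated commutator could be driven to zero; taking $K$ large enough to suppress it would force $\eta=\sqrt{8d_S}K\|H_I\|$ and hence $d_{\text{trunc}}$ to blow up, defeating the truncation. So the substitution as you describe it does not go through.

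The paper's actual device is different: it enlarges the Hilbert space by two extra energy eigenstates $\ket{j_{\min}},\ket{j_{\max}}$ at the spectral edges and defines auxiliary $\overline{\rho}_0,\overline{A},\overline{H}$ in which a small weight $|x|/2$ is placed on the extremal gap $G_{\max}$. The second moment of the modified gap distribution then acquires an extra term $\tfrac{x\|A\|}{2}G_{\max}^2$, and $x$ is \emph{chosen} so that $\overline{\sigma}_G^2\ge\tfrac{1}{\overline{Q}\|\overline{A}\|}\big|\tr{\big[[\rho_0,H],H\big]A}\big|$ holds exactly with the original state. The same large factor $G_{\max}^2$ that ruins the naive additive estimate now appears in the denominator when solving for $x$, yielding $|x|\le 4/K$; this is why the perturbation is harmless (it contributes the $+2/K$ in $Q_2$ and part of the $18/K^2$) even though the raw commutator discrepancy is not. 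Without this (or an equivalent) mechanism, your proof cannot produce a bound featuring the computable commutator of the untruncated $\rho_0$, which is the main point of Theorem~\ref{theorem-boundmc}.
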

Taking a system observable $A = A_S \otimes \id_B$ we recover the main result in Section~\ref{sec:main result}, since the Hamiltonian $H_B$ in eq.~\eqref{eq:boundmc} commutes with $\rho_0$ and $A$.

Let us consider this result more closely. To begin with, all time independent terms have to be small for our theorem to imply equilibration in the first place.  The factors involving $K$ in equations~(\ref{eq:boundmc}) and~(\ref{eq:boundmc-Qfactor}) come from the truncation procedure, 
and are small as long the microcanonical window and the truncation window are large enough. 

The other time independent term is $\pi\, \delta(\epsilon) Q_{2}^2$, which we have neglected so far. As discussed in 
Section~\ref{sec:obsdepbound},
for distributions $p_\alpha$ that are approximately unimodal and sufficiently spread over different values one can  estimate that $\delta(\epsilon) \sim \frac{\epsilon}{\sigma_G} $. 
Notice that as the bath grows in size one would expect that this holds for smaller values of $\epsilon$, since the distribution $p_\alpha$ would be spread over more values. We could therefore take $\epsilon$ smaller and smaller and  reduce $\delta(\epsilon)$.
At the same time, the bound on $Q_{2}$ in equation~(\ref{eq:boundmc-Qfactor}) will generally not grow with the dimension of the bath. 
To see this note that typically (e.g. for short range interactions in a lattice system), $\|H_I\|$ will not increase significantly as the bath size increases, and that increasing the width $\Delta$ of the microcanonical window as the bath grows will cause the bound to become tighter.
Therefore, in the limit of increasing bath sizes the term $\pi\, \delta(\epsilon) Q_{2}^2$ becomes negligible, as needed. 

The fact that the results in Theorem~\ref{theorem-boundmc} do not depend on the dimension of the full Hilbert space is a very noticeable aspect of this paper. 
This is in stark contrast with previously know general upper bounds on the time scale of equilibration~\cite{ShortFarrelly11}, which essentially scale with the full Hilbert space dimension.

Finally, the first term in eq.~(\ref{eq:boundmc}) determines the time decay of the weak-distinguishability, and can be interpreted the same way as in the corresponding term in Theorem~\ref{theorem-genboundtime} (see subsequent discussion). 
Notice that, once $a(\epsilon)$ is estimated, the time dependence can in general be calculated analytically for a given initial state, Hamiltonian and observable. 
Moreover, performing this calculation is much simpler than solving the exact time evolution, which involves commutators of initial state and Hamiltonian of all orders and can only be done for simple models.

It is illuminating to ask how our bound behaves in a case where no equilibration occurs. Take, for example, a spin $1/2$ in a pure initial state $\ket{\Psi} = \tfrac{1}{\sqrt{2}} \left( \ket{\uparrow} + \ket{\downarrow} \right)$ as the system $\mathcal{S}$, and a bath composed of $N$ other spins in the microcanonical ensemble. Furthermore, take the Hamiltonian $H = \Omega \sigma_z^S + H_B$, and the observable $A = \sigma_x^S \otimes \id_B$. Since the system does not interact with the bath it does not equilibrate with respect to the observable $A$. The key to understanding where our bound expresses this fact is in the factor $\delta(\epsilon)$. It is easy to see that the distribution $p_\alpha$ is composed of only 2 values, corresponding to the gaps $\Omega$ and $-\Omega$, which
results in $\delta(\epsilon) \ge \tfrac{1}{2}$ for any $\epsilon$, hence no equilibration at all.

\subsection{System interacting with an environment in a pure state: the typical behaviour}
\label{sec:statebath}

So far we have considered mixed initial states of the total closed system. Here we show that our results can be extended to the typical behaviour of pure initial states of the environment that interact with the system.

Let us consider the environment's initial state to be pure, and drawn at random from the microcanonical window. Any pure state from the microcanonical window can be written as
\begin{align}
\rho_B^U = U \ket{\psi} \bra{\psi} U^\dag \label{eq:initialbath},
\end{align}
where $U$ is a unitary operator acting on $\mathcal{H}^{E_B,\Delta}_{H_B}$. By averaging over all possible $U$'s, drawn from the Haar measure, we have the typical behavior for random pure states from the microcanonical subspace $\mathcal{H}^{E_B,\Delta}_{H_B}$.

It turns out that taking the initial environment state to be a  pure state chosen at random from a microcanonical window  leads to very similar results to the environment starting in the microcanonical mixed state $\frac{\id_{B}^\Delta}{d_B^\Delta}$ in the subspace $\mathcal{H}^{E_B,\Delta}_{H_B}$.
More precisely, we show in Appendix~\ref{app3:typical-bath} the following.
\begin{proposition}[Evolution for typical initial states of the bath]
    \label{proposition-typicalbath}
    The weak-distinguishability averaged over all possible initial pure states of the environment drawn from a microcanonical window 
    of width $\Delta$ satisfies
    \begin{align}
    \left\langle \mathcal{\widetilde D}_A\big(\rho_t^U,\omega^U\big) \right\rangle_U \le \mathcal{\widetilde D}_A\big(\rho_t ,\omega \big)  +   \frac{d_S}{d_B^\Delta},       \nonumber
    \end{align}
where $\rho_0^U = \rho_S \otimes U \ket{\psi} \bra{\psi} U^\dag$ with corresponding evolved and equilibrium states $\rho_t^U$ and $\omega^U$, and $\rho_0 = \rho_S \otimes \frac{\id_{B}^\Delta}{d_B^\Delta}$ with corresponding evolved and equilibrium states $\rho_t $ and $\omega $.
\end{proposition}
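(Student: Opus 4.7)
The strategy is to reduce the Haar average over $U$ to a calculation involving the first two moments of $\rho_B^U = U\ket{\psi}\bra{\psi}U^{\dagger}$. Since $\mathbb{E}_U[\rho_B^U] = \id_B^\Delta/d_B^\Delta$, we have $\mathbb{E}_U[\rho_0^U] = \rho_0$, so linear functionals of $\rho_0^U$ average exactly to their values at the microcanonical state. The weak-distinguishability, however, is quadratic in the state, so the second Haar moment of $\rho_B^U$ enters and produces an additive correction controlled by the inverse of the microcanonical dimension.

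The first step would be to rewrite the weak-distinguishability in a form linear in $\rho_B^U$. Observing that $\omega^U$ is the dephasing of $\rho_0^U$ in the energy eigenbasis of $H$ yields $\tr{\omega^U A} = \tr{\rho_0^U \bar A}$, where $\bar A := \lim_{T\to\infty}\tfrac{1}{T}\int_0^T e^{iHt} A e^{-iHt}\,dt$ is the infinite-time average of $A$ in the Heisenberg picture. Hence
\[
\mathcal{\widetilde D}_A\bigl(\rho_t^U,\omega^U\bigr) = \frac{\bigl|\tr{\rho_0^U C_t}\bigr|^2}{4\|A\|^2},\qquad C_t := e^{iHt}A e^{-iHt} - \bar A.
\]
Because $U$ acts only on the bath, it is natural to absorb the system degrees of freedom into an effective bath operator $\hat C_t := \trs{(\rho_S \otimes \id_B)\,C_t}$, so that $\tr{\rho_0^U C_t} = \trb{\rho_B^U \hat C_t}$. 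The standard partial-trace-against-a-state inequality gives $\|\hat C_t\| \le \|C_t\| \le 2\|A\|$.

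Next I would apply the Haar second-moment formula on the microcanonical subspace,
\[
\mathbb{E}_U\bigl[\rho_B^U \otimes \rho_B^U\bigr] = \frac{\id_B^\Delta \otimes \id_B^\Delta + S_P}{d_B^\Delta(d_B^\Delta+1)},
\]
where $S_P$ denotes the swap restricted to the microcanonical subspace, to the quantity $\bigl|\trb{\rho_B^U \hat C_t}\bigr|^2 = \trx{(\rho_B^U\otimes\rho_B^U)(\hat C_t\otimes\hat C_t)}{B_1 B_2}$. The $\id_B^\Delta \otimes \id_B^\Delta$ piece contributes $\tfrac{d_B^\Delta}{d_B^\Delta+1}\bigl|\tr{\rho_0 C_t}\bigr|^2 \le 4\|A\|^2\,\mathcal{\widetilde D}_A(\rho_t,\omega)$. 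The swap piece evaluates, via the swap identity on the bath factor only, to $\tfrac{1}{d_B^\Delta(d_B^\Delta+1)}\,\trb{\id_B^\Delta \hat C_t \id_B^\Delta \hat C_t}$; bounding this via $\tr{PXPX} \le \tr{PX^2}$ (equivalent to $\|(\id-P)XP\|_2^2 \ge 0$) followed by $\tr{PX^2} \le \|X\|^2 \tr{P}$ gives at most $4\|A\|^2/(d_B^\Delta+1)$. Dividing by $4\|A\|^2$ yields the claimed bound, with room to spare (one in fact obtains $1/(d_B^\Delta+1) \le d_S/d_B^\Delta$).

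The main obstacle is keeping the tensor-factor bookkeeping straight, since $U$ acts only on the microcanonical subspace of the bath while $A$ may couple system and bath. This is handled cleanly by the initial reduction to the effective bath operator $\hat C_t$ before invoking the Haar identities; the restriction of the swap to the microcanonical subspace is automatically encoded in the projections $\id_B^\Delta$ appearing in the second-moment formula.
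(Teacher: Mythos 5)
Your proposal is correct and follows essentially the same route as the paper's Appendix~\ref{app3:typical-bath}: shift the time dependence onto the observable, reduce to an effective bath operator ($\hat C_t$ here, $C$ in the paper), and apply the Haar second-moment formula on the microcanonical subspace, splitting into an identity piece (which reproduces $\mathcal{\widetilde D}_A(\rho_t,\omega)$) and a swap piece (which gives the additive correction). The one genuine difference is how the swap term $\trb{\id_B^\Delta \hat C_t \id_B^\Delta \hat C_t}$ is bounded: the paper expands the underlying operator in an orthonormal Hermitian operator basis to show $\trb{C^2}\le d_S\tr{O^2}$ and then applies Cauchy--Schwarz, arriving at $d_S/(d_B^\Delta+1)$; you instead bound the operator norm of the effective bath operator directly via $\|\hat C_t\|\le\|C_t\|\le 2\|A\|$ (valid since $\hat C_t$ is Hermitian and $\bra{\phi}\hat C_t\ket{\phi}=\tr{(\rho_S\otimes\proj{\phi})C_t}$) and then use $\tr{PXPX}\le\tr{PX^2}\le\|X\|^2\tr{P}$. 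Your version is slightly more elementary and in fact tighter, yielding $1/(d_B^\Delta+1)$ in place of $d_S/(d_B^\Delta+1)$, so the stated bound follows with a factor $d_S$ to spare.
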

Since the microcanonical window is assumed to contain many more levels than the system's dimension, $d_B^\Delta \gg d_S$, the above expression implies that for typical initial pure states of the bath the evolution is as if the initial state was $\rho_0 = \rho_S \otimes \frac{\id_{B}^\Delta}{d_B^\Delta}$. 

It is straightforward to combine this Proposition with Theorem~\ref{theorem-boundmc} and show that the upper bound for the typical time scale of equilibration for a system interacting with a bath in a pure state is the same as if the bath were in the microcanonical state.

\section{Discussion}

\label{sec:unimodalityassumption}

From previous work we know that one needs to impose further conditions in order to prove reasonably fast equilibration, since extremely slow observables can always be constructed~\cite{Goldstein13,Malabarba14}. 

In this article we have found a set of sufficient conditions that ensure this.
More precisely, when the
distribution $p_\alpha \equiv \tfrac{1}{Q\|A\|} |\rho_{jk}| |A_{kj}|$ --which characterises the energy gaps that are most
relevant to the particular state and observable under consideration-- is approximately unimodal and spread over many different values, 
one expects $a(\epsilon) \sim 1$ and $\delta(\epsilon) \ll 1$. 
In the setting of a system interacting with a thermal bath, this implies equilibration time scales that do not scale as the size of the bath grows, as Theorem~\ref{theorem-boundmc} shows. 
 
Whether the above holds or not ultimately boils down to the values of the off-diagonal matrix elements of the observable and initial state in the energy basis, and to the
distribution of energy gaps.
Nevertheless, there are general arguments indicating that $a(\epsilon) \sim 1$ and $\delta(\epsilon) \ll 1$ might hold for a wide range of systems. 
Firstly, in typical situations one might expect that a state is spread roughly equally over a
range of energies (this occurs for example for thermal states), and that,  unless the
observable $A$ is fine-tuned,  its components are also spread relatively
smoothly over this band of energies, at least in a coarse-grained sense.
Secondly,
for large systems the distribution of energy levels tend to grow exponentially with energy in the region of finite temperature. It is easy to check that if one assumes a density of states $\nu(E) \propto e^{\beta E}$, the corresponding density of gaps scales like
$\mu\left(\left|G\right|\right) \propto e^{-\beta \left|G\right|} $,
with an exponential decrease.
This implies that, in order to have a resulting distribution that is characterised by one peak, it is \emph{sufficient} to have matrix elements of $A$ and of the initial state $\rho_0$ that grow sub-exponentially as a function of the energy gaps. 
This does not seem like a particularly strong assumption. 
Finally, let us note that even when $p_\alpha$ is not unimodal, and is instead composed of a number of distinct peaks, Theorem~\ref{theorem-boundmc} will still give reasonable equilibration times (in particular, times which are approximately independent of the size of the bath) except in the case where the individual peaks get sharper as the size of the bath is increased. 
Our intuition is that such behavior is rare, and thus that the bound will have very general applicability. 

The remaining question is whether physically relevant cases will in general be of this form (satisfying 
$a(\epsilon) \sim 1$ and $\delta(\epsilon) \ll 1$,
and therefore ``reasonably fast equilibrating'') or of the other (violating these conditions and therefore ``slow equilibrating''). 
Appendix~\ref{app:simulation} illustrates the transition to approximate unimodality, and the conditions being met, as environment size increases in a simulation of a 1D Ising model
with transversal magnetic field and periodic boundary conditions. However, proving that this occurs and finding the physical conditions under which it happens remains an interesting open problem for future study.

It is worthwhile comparing our conditions with the assumptions made in previous work in order to prove equilibration of closed quantum systems.
Equilibration can be proven by assuming that the effective dimension defined as $\de = \frac{1}{\sum_j \rho_{jj}^2}$ (for non degenerate spectrum for simplicity) is large~\cite{Reimann08,Lin09}, and that the Hamiltonian does not have too many degenerate energy gaps~\cite{ShortFarrelly11}. Notice that, although we do not make these assumptions explicitly, we are in some sense implicitly assuming both of them. On the one hand, a high effective dimension is related to having many energy levels populated in the system, which is necessary 
in order to have a distribution $p_\alpha$ that is spread over many different values.
On the other hand, the presence of a very degenerate energy gap results in a distribution $p_\alpha$ such that $\delta(\epsilon) \ll 1$ does not hold, as the simple example after Theorem~\ref{theorem-boundmc} illustrates.

The present paper emphasizes the importance of the off-diagonal matrix elements of observable and initial state to the study of the equilibration time scales in closed quantum systems. 

The Eigenstate Thermalization Hypothesis, introduced by Deutsch and Srednicki as a sufficient condition for thermalization~\cite{Deutsch91,Srednicki99}, has motivated extensive work on the distribution of diagonal matrix elements of observables~\cite{Rigol07,Rigol08,Rigol09,Rigol10,Beugeling14,Steinigeweg14,Khodja14,Kim14}.
However, much less work dealing with the distribution of the off-diagonal matrix elements is available, some examples being~\cite{FeingoldyPeres86,Steinigeweg13,Beugeling14offdiag,Fuchs14}. The recent papers \cite{Steinigeweg13,Beugeling14offdiag} show, in certain models, gaussian distribution of these matrix elements for local observables, which supports our claim that a unimodal distribution of $p_\alpha$ is to be expected in many situations. Moreover,~\cite{Fuchs14} numerically verifies our predictions in an experimentally realizable setup consisting of an electron in a quantum dot interacting with a bath of nuclear spins. 
Remarkably, the authors find that, even though our results are model independent and not tailored to this particular system, our new bounds fall within two orders of magnitude of the actual time scale.

Our paper focuses on the equilibration of a small system with respect to a 
pre-equilibrated bath, but many open questions remain regarding  general equilibration timescales. One direction of particular interest is an equilibration timescale for the bath itself, and what aspects are necessary for it to play its usual thermodynamic role. We hope that the tools developed here will aid in further study along these lines and help shed further light into this important topic.

\section*{Acknowledgements}

We would like to thank Moritz Fuchs, Daniel Hetterich, and Bj\"orn Trauzettel for many illuminating discussions, and for sharing the findings of~\cite{Fuchs14} prior to publication. 
Part of this work was supported by the COST Action
MP1209 ``Thermodynamics in the quantum regime''.
ASLM acknowledges support from the CNPq.
AJS acknowledges support from the Royal Society. 
AW is supported by the EU (STREP
``RAQUEL''), the ERC (AdG ``IRQUAT''), the Spanish
MINECO (grant FIS2013-40627-P) with the support of
FEDER funds, as well as by the Generalitat de Catalunya
CIRIT, project 2014-SGR-966.

\bibliography{references}

\appendix

\section{Equilibration in terms of the distinguishability}
\label{app:distinguishability}

Equilibration of the expectation value of some observable does not imply equilibration of the observable itself. Here we show how the results can be cast into a stronger sense of equilibration in terms of the distinguishability, as we used in~\cite{Malabarba14}. Distinguishability of states $\rho$ and $ \sigma$ with respect to an observable $\mathcal{M} = \left\{ P_1,P_2,\ldots,P_N \right\} $, where $P_i$ are a complete set of projectors, is defined by
\begin{align}
\mathcal{D}_{\mathcal{M}}(\rho,\sigma) = \frac{1}{2}\sum_i \big| \tr{P_i \rho} - \tr{P_i \sigma} \big|,
\end{align}
and it characterizes the probability of successfully guessing between the two states (assuming they are given with equal probabilities), via $p_{\text{success}} = \tfrac{1}{2} + \tfrac{1}{2}\mathcal{D}_{\mathcal{M}}(\rho,\sigma)$. By Jensen's and Cauchy-Schwarz's inequalities we can relate the distinguishability $\mathcal{D}_{\mathcal{M}}$ to the weak-distinguishability $\widetilde{\mathcal{D}}_{P_i}$ considered in this paper:
\begin{align}
&\big\langle \mathcal{D}_{\mathcal{M}}(\rho_t,\omega) \big\rangle_T \le \sqrt{\left\langle \mathcal{D}_{\mathcal{M}}^2(\rho_t,\omega) \right\rangle_T} \nonumber \\
& \qquad \le  \sqrt{N} \sqrt{\frac{1}{4}\sum_i \left\langle \big| \tr{P_i \rho_t} - \tr{P_i \omega} \big|^2 \right\rangle_T}  \nonumber \\
& \qquad =  \sqrt{N} \sqrt{\sum_i \left\langle \widetilde{\mathcal{D}}_{P_i}(\rho_t,\omega) \right\rangle_T}.
\end{align}
Each term $\left\langle \widetilde{\mathcal{D}}_{P_i}(\rho_t,\omega) \right\rangle_T$ can be bounded via the results from Theorems~\ref{theorem-genboundtime} and \ref{theorem-boundmc}, and therefore fast equilibration of the projectors will imply fast equilibration of the distinguishability.

\section{The $\xi_p \big( \frac{1}{T} \big)$ function -- Proposition~\ref{proposition-xibound}}
\label{app:xifunc}

We wish to bound the decay with time $T$ of
\begin{equation}
\sum_{\alpha \beta  } p_\alpha p_\beta e^{-|G_\alpha -G_\beta| T}.
\end{equation}

To connect this to the function
\begin{align}
   \xi_p \big( \tfrac{1}{T} \big) = \max_{G \in \mathbb{R}} \sum_{\substack{\alpha \;: \\ G_\alpha \in \big[ G, G + \tfrac{1}{T}\big] }}  p_\alpha,
\end{align}
we define the auxiliary function
\begin{equation}
    \label{eq:16}
    g(x) =
    \begin{cases}
        1, \mbox{ if } x \in [0,1) \\
        0, \mbox{ otherwise. }
    \end{cases}
\end{equation}
This definition allows to upper bound the exponential as
\begin{equation}
    \expo{-\left| x \right|}\leq \sum_{n = 0}^{\infty} \expo{-n} g \left( \left| x \right| - n \right).
\end{equation}
One can then see
\begin{align}
    \label{eq:4}
    &\!\!\!\!\!\!\!\!\sum_{\alpha \beta}p_{\alpha}p_{\beta} e^{- |G_\alpha-G_\beta |T } \nonumber \\
    &\leq \sum_{n = 0}^{\infty} \expo{-n} \sum_{\alpha} p_\alpha \sum_{\beta} p_\beta g \left( \left| G_\alpha - G_\beta \right|T - n \right) \notag\\
    &= \sum_{n = 0}^{\infty} \expo{-n} \sum_{\alpha} p_\alpha \!\! \!\!
    \sum_{\substack{\beta\;: \\ \left(\left| G_\alpha - G_\beta \right| T - n \right) \in [0, 1)}} \!\! \!\! \!\! \!\! p_\beta \notag\\
    & \leq \sum_{n = 0}^{\infty} \expo{-n} \sum_{\alpha} p_\alpha
    \left[\sum_{\substack{\beta\;:\\ G_\beta \in I_{-}^{(n)}}} p_\beta + \sum_{\substack{\beta\;:\\ G_\beta \in I_{+}^{(n)}}} p_\beta \right]
    \notag\\
    &\leq \sum_{n = 0}^{\infty} \expo{-n} \sum_{\alpha} p_\alpha \Big[ 2 \xi_p \big( \tfrac{1}{T} \big) \Big] \notag\\
    &= \frac{2}{1-\expo{-1}} \xi_p \big( \tfrac{1}{T} \big)
\end{align}
where $I^{(n)}_{+}\! =\! \left[G_\alpha + \frac{n}{T}, G_\alpha+ \frac{(n+1)}{T} \right)$ and  
$I_{-}^{(n)} \!=\! \left(G_{\alpha} - \frac{(n+1)}{T}, G_\alpha - \frac{n}{T}\right]$,
 and the inequality in the
penultimate line is valid for any $n$ and $\alpha$.

Together with equation~(\ref{eqapp:xi}),
\begin{align}
\left\langle  \mathcal{\widetilde D}_A(\rho_t,\omega) \right\rangle_T \le \frac{5 \pi Q^2}{16} \sum_{ \alpha \beta   }   p_\alpha p_\beta e^{-|G_\alpha -G_\beta| T},
\end{align}
we obtain
\begin{align}
        \left\langle  \mathcal{\widetilde D}_A(\rho_t,\omega) \right\rangle_T &\le \frac{5 \pi }{8(1-e^{-1})} Q^2 \xi_p \big( \tfrac{1}{T} \big) \nonumber \\
        &\le \pi Q^2 \xi_p \big( \tfrac{1}{T} \big),
    \end{align}
    showing Proposition~\ref{proposition-xibound}.

\section{Truncation of the Hilbert space -- Proposition~\ref{proposition-truncationmc}
}
\label{app:truncation-errorsmc}

As mentioned in the main text, the state $\rho_0 = \rho_S \otimes \frac{\id_B^\Delta}{d_B^\Delta} $ lies within the subspace $\mathcal{H}_{H_S+H_B}^{E_B,\Delta+2\|H_S\|}$.
We will show that, when considering the full interacting Hamiltonian $H = H_S+H_B+H_I$, it is enough to consider the truncated subspace
$\mathcal{H}_{H_S+H_B+H_I}^{E_B,\Delta + 2\|H_S\| + \eta}$, where $\eta$ denotes the amount by which the energy window is extended.

The effect of ``cutting'' the state outside the space $\mathcal{H}_{H_S+H_B+H_I}^{E_B,\Delta + 2\|H_S\| + \eta}$ will obviously introduce errors. We wish to do it such that the truncated state remains close to the original one in trace distance:
\begin{align}
    \| \rho_0 - \Pi \rho_0 \Pi \|_1 & \le  \frac{2}{K}, 
\end{align}
    where $\Pi$ is the projector onto $\mathcal{H}^{E_B,\Delta + 2 \| H_S \| + \eta}_{H_B + H_S + H_I}$ and $\eta = \sqrt{8 d_S} \|H_I\| K$. 

An arbitrary initial state of the system can be written as
\begin{equation}
\rho_S = \sum_{l,l'} \rho_{ll'}^S \ket{E_l^S} \bra{E_{l'}^S},
\end{equation}
where  $\ket{E_l^S} $ are eigenvectors of $H_S$ with energy $E_l^S$.
As seen in the main text, the state of the bath is effectively proportional to the identity onto a window of $H_B$, and therefore diagonal in the basis $H_B$.
Hence, we can write the total state as
\begin{align}
\rho_0 &= \sum_\lambda \sum_{l,l'} c_\lambda \rho_{ll'}^S \ket{E_l^S} \bra{E_{l'}^S} \otimes \ket{E_\lambda^B} \bra{E_\lambda^B} \nonumber \\
&\equiv \sum_{\lambda,l,l'} c_\lambda \rho_{ll'}^S \ket{E_l^S, E_\lambda^B} \bra{E_{l'}^S, E_\lambda^B},
\end{align}
where $\ket{E_\lambda^B}$ are the eigenvectors of $H_B$ and the coefficients $c_\lambda$ are positive and normalized (we will do the calculation for an arbitrary state of the bath commuting with $H_B$, but in our case actually $c_\lambda = 1/d_B^\Delta$).

The following result will be useful.
\begin{lemma}[Gentle measurement~\cite{AWgentlemeasurement}]
For any state $\rho$ and positive operator $X$ such that $X \le I$ and 
\begin{equation}
1-\tr{\rho X} \le \frac{1}{2K^2} \le 1, \end{equation}
one has
\begin{equation}
\left\| \rho- \sqrt{X} \rho \sqrt{X} \right\|_1 \le \frac{2}{K}.
\end{equation}
\end{lemma}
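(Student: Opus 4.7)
The plan is to write $Y = \sqrt{X}$ and first decompose the quantity of interest as
\begin{align}
\rho - Y\rho Y = (\id - Y)\rho + Y\rho(\id - Y),
\end{align}
so that by the triangle inequality $\|\rho - Y\rho Y\|_1 \le \|(\id - Y)\rho\|_1 + \|Y\rho(\id - Y)\|_1$. I would then bound each of the two Schatten-$1$ norms via the Cauchy--Schwarz inequality for the trace norm, $\|AB\|_1 \le \|A\|_2\|B\|_2$, after factoring $\rho = \sqrt{\rho}\sqrt{\rho}$. Using that $0 \le Y \le \id$ (which follows from the hypothesis $0 \le X \le \id$), each of the two pieces is at most $\sqrt{\tr{\rho(\id - Y)^2}}$, giving
\begin{align}
\|\rho - Y\rho Y\|_1 \le 2\sqrt{\tr{\rho(\id - Y)^2}}.
\end{align}

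The second step would be to reduce the quantity under the square root to the hypothesis $1 - \tr{\rho X} \le 1/(2K^2)$. The key operator inequality is $(\id - \sqrt{X})^2 \le \id - X$, which follows from simultaneous diagonalisation of $X$ and $\sqrt{X}$ combined with the scalar inequality $(1-\sqrt{\lambda})^2 \le 1 - \lambda$ for $\lambda \in [0,1]$, equivalent to $\sqrt{\lambda}\big(1 - \sqrt{\lambda}\big) \ge 0$. Tracing against $\rho$ and inserting the assumption then yields
\begin{align}
\tr{\rho(\id - \sqrt{X})^2} \le \tr{\rho(\id - X)} = 1 - \tr{\rho X} \le \frac{1}{2K^2},
\end{align}
and substituting back gives the advertised bound $\|\rho - \sqrt{X}\rho\sqrt{X}\|_1 \le 2/\sqrt{2K^2} = \sqrt{2}/K \le 2/K$.

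The main obstacle I expect is essentially cosmetic: making sure the operator inequality $(\id - \sqrt{X})^2 \le \id - X$ is properly justified through the functional calculus, since one is comparing two different functions of the single operator $X$. Once this is in place, the rest of the argument is a short chain of the triangle inequality and Cauchy--Schwarz, and the factor of $\sqrt{2}$ of slack between the actual bound $\sqrt{2}/K$ and the stated bound $2/K$ leaves comfortable room in the final constant. An alternative route would be to prove the pure-state case first, writing $\sqrt{X}\ket{\psi}$ in terms of its normalised direction, and then extend to mixed $\rho$ by convexity of the trace distance; this avoids taking square roots of $\rho$ but requires a separate argument for the fidelity-to-trace-distance step, so I would favour the commutator-style decomposition above.
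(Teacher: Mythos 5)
Your proof is correct, and in fact slightly stronger than needed: the chain of triangle inequality, the splitting $\rho - Y\rho Y = (\id - Y)\rho + Y\rho(\id - Y)$ with $Y=\sqrt{X}$, Cauchy--Schwarz in the form $\|AB\|_1\le\|A\|_2\|B\|_2$ after factoring $\rho=\sqrt{\rho}\sqrt{\rho}$, and the operator inequality $(\id-\sqrt{X})^2\le \id - X$ (valid by functional calculus since both sides are functions of $X$ with eigenvalues in $[0,1]$) together give $\|\rho-\sqrt{X}\rho\sqrt{X}\|_1\le 2\sqrt{1-\tr{\rho X}}\le \sqrt{2}/K\le 2/K$. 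Note that the paper does not prove this lemma at all --- it imports it verbatim from the cited reference --- so there is no in-paper argument to compare against; your derivation is the standard sharpened form of the gentle measurement lemma (\`a la Ogawa--Nagaoka/Winter), and the only point worth being careful about is the one you already flag, namely justifying $(\id-\sqrt{X})^2\le \id-X$ via simultaneous diagonalisation and the scalar inequality $\lambda\le\sqrt{\lambda}$ on $[0,1]$, plus the observation $\|\sqrt{X}\sqrt{\rho}\|_2^2=\tr{\rho X}\le 1$ needed for the second term.
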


In order to apply the lemma for the state $\rho = \rho_0$ and the operator $X = \sqrt{X} = \Pi$, we start with
\begin{align}
\label{eq:apptruncationAUX3}
&1-\tr{\rho_0 \Pi} = \tr{\rho_0 \Pi^\bot}  \\
&\qquad= \sum_{\lambda,l,l'} c_\lambda \rho_{ll'}^S \tr{ \ket{E_l^S, E_\lambda^B} \bra{E_{l'}^S, E_\lambda^B} \Pi^\bot} \nonumber \\
&\qquad \le \sum_{\lambda,l,l'} c_\lambda \left| \rho_{ll'}^S \right|   \left|  \bra{E_{l'}^S, E_\lambda^B} \Pi^\bot \ket{E_l^S, E_\lambda^B} \right| \nonumber \\
&\qquad\le \sum_{\lambda,l,l'} c_\lambda  \left| \rho_{ll'}^S \right|  \left\| \Pi^\bot \ket{E_l^S, E_\lambda^B} \right\|  \left\|  \Pi^\bot \ket{E_{l'}^S, E_\lambda^B} \right\|, \nonumber 
\end{align}
by taking absolute values in line 3, in line 4 using the Cauchy-Schwarz inequality (with $\left\| \ket{\psi}\right\| = \sqrt{\langle \psi| \psi \rangle}$ as usual), and denoting the orthogonal complement of $\Pi$ by $\Pi^\bot$.

In order to upper bound this expression, we use Bhatia's perturbation theory result (Theorem VII.3.1) in~\cite{Bhatia1997}  (for another very interesting application  treating the problem of proving thermalization in closed quantum systems see~\cite{Riera12}).
\begin{theorem}
\label{bhatiapert}
Let $O$ and  $P$ be normal operators,
$S_1$ and $S_2$ be two subsets of the complex plane that are separated by a strip (or annulus) of
width $\Delta$, and let $E$ ($F$) denote the orthogonal projection onto the subspace spanned by the eigenvectors of $O$ ($P$) corresponding to those
of its eigenvalues that lie in $S_1$ ($S_2$). Then, for every unitarily
invariant norm $\vert\kern-0.20ex\vert\kern-0.20ex\vert \cdot \vert\kern-0.20ex\vert\kern-0.20ex\vert$,
\begin{align}
\vert\kern-0.20ex\vert\kern-0.20ex\vert E F \vert\kern-0.20ex\vert\kern-0.20ex\vert \le \frac{ \vert\kern-0.20ex\vert\kern-0.20ex\vert O-P \vert\kern-0.20ex\vert\kern-0.20ex\vert }{\Delta}.
\end{align}
\end{theorem}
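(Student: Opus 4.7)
The plan is to recast the inequality as a Sylvester equation $OX - XP = Y$ and solve it by an integral representation adapted to the geometry of the separating strip. Since $E$ is a spectral projector of the normal operator $O$, we have $EO = OE$, and analogously $FP = PF$. Consequently,
\begin{equation}
OEF - EFP \;=\; EOF - EPF \;=\; E(O-P)F.
\end{equation}
Writing $X := EF$ and $Y := E(O-P)F$, the sought inequality reduces to $\vert\kern-0.20ex\vert\kern-0.20ex\vert X \vert\kern-0.20ex\vert\kern-0.20ex\vert \le \vert\kern-0.20ex\vert\kern-0.20ex\vert Y \vert\kern-0.20ex\vert\kern-0.20ex\vert / \Delta$, since the Ky Fan dominance principle $\vert\kern-0.20ex\vert\kern-0.20ex\vert E(O-P)F \vert\kern-0.20ex\vert\kern-0.20ex\vert \le \vert\kern-0.20ex\vert\kern-0.20ex\vert O - P \vert\kern-0.20ex\vert\kern-0.20ex\vert$ is automatic for the contractions $E, F$.

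After a rotation and translation of the complex plane, one may assume $S_1 \subset \{\mathrm{Re}\, z \le a\}$ and $S_2 \subset \{\mathrm{Re}\, z \ge a + \Delta\}$. Spectrally decomposing $O = \sum_i \lambda_i P_i^O$ and $P = \sum_j \mu_j Q_j^P$, so that $E = \sum_{\lambda_i \in S_1} P_i^O$ and $F = \sum_{\mu_j \in S_2} Q_j^P$, each block of the Sylvester operator acts as scalar multiplication by $\lambda_i - \mu_j$, whose real part is at most $-\Delta$. The identity
\begin{equation}
\frac{1}{\lambda_i - \mu_j} \;=\; -\int_0^\infty e^{-t(\mu_j - \lambda_i)}\,dt
\end{equation}
inverts each block, and resumming gives the operator formula
\begin{equation}
EF \;=\; -\int_0^\infty e^{tO}\, E(O-P)F \, e^{-tP}\, dt,
\end{equation}
which is well defined since only the restrictions of $e^{tO}$ to $\mathrm{range}(E)$ and of $e^{-tP}$ to $\mathrm{range}(F)$ (with bounded spectra) actually enter.

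Taking any unitarily invariant norm and applying $\vert\kern-0.20ex\vert\kern-0.20ex\vert RYS \vert\kern-0.20ex\vert\kern-0.20ex\vert \le \|R\|\,\|S\|\,\vert\kern-0.20ex\vert\kern-0.20ex\vert Y \vert\kern-0.20ex\vert\kern-0.20ex\vert$, together with the spectral bounds $\|e^{tO}E\| \le e^{ta}$ and $\|e^{-tP}F\| \le e^{-t(a+\Delta)}$, yields
\begin{equation}
\vert\kern-0.20ex\vert\kern-0.20ex\vert EF \vert\kern-0.20ex\vert\kern-0.20ex\vert \;\le\; \vert\kern-0.20ex\vert\kern-0.20ex\vert E(O-P)F \vert\kern-0.20ex\vert\kern-0.20ex\vert \int_0^\infty e^{-t\Delta}\, dt \;=\; \frac{\vert\kern-0.20ex\vert\kern-0.20ex\vert E(O-P)F \vert\kern-0.20ex\vert\kern-0.20ex\vert}{\Delta},
\end{equation}
which, combined with the initial reduction, proves the theorem in the strip case.

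The main obstacle is the annular case (spectra separated by a closed curve rather than parallel half-planes), in which the one-sided exponential representation is not directly available. For that, I would replace the exponential identity by the Riesz--Dunford contour integral
\begin{equation}
\frac{1}{\lambda_i - \mu_j} \;=\; \frac{1}{2\pi i}\oint_\Gamma \frac{dz}{(\lambda_i - z)(z - \mu_j)},
\end{equation}
with $\Gamma$ a closed curve in the separating annulus, yielding $EF$ as a contour integral of resolvents. The argument then closes once one establishes the geometric kernel bound $\frac{1}{2\pi}\oint_\Gamma \frac{|dz|}{\mathrm{dist}(z,S_1)\,\mathrm{dist}(z,S_2)} \le 1/\Delta$ for a cleverly chosen $\Gamma$ in an annulus of width $\Delta$. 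Verifying this estimate uniformly over admissible geometries is the residual technical step most likely to require care.
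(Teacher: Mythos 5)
This theorem is not proved in the paper at all --- it is quoted verbatim from Bhatia (Theorem VII.3.1 of \cite{Bhatia1997}) --- so your attempt can only be measured against the standard argument. Your reduction and your strip case are correct and coincide with that argument: the identity $OEF-EFP=E(O-P)F$ (using that spectral projections of normal operators commute with the operators), the integral representation $EF=-\int_0^\infty e^{tO}E\,\big(E(O-P)F\big)\,Fe^{-tP}\,dt$, the bounds $\|e^{tO}E\|\le e^{ta}$ and $\|e^{-tP}F\|\le e^{-t(a+\Delta)}$ (both of which use normality essentially), and the factorisation property $\vert\kern-0.20ex\vert\kern-0.20ex\vert RYS\vert\kern-0.20ex\vert\kern-0.20ex\vert\le\|R\|\,\|S\|\,\vert\kern-0.20ex\vert\kern-0.20ex\vert Y\vert\kern-0.20ex\vert\kern-0.20ex\vert$ of unitarily invariant norms together give exactly the constant $1/\Delta$.

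The genuine gap is the annulus case, and it is not cosmetic here: in the application in Appendix~C, $E$ projects onto a single eigenvalue $E_0$ of $H_S+H_B$ and $F=\Pi^\bot$ projects onto eigenvalues of $H$ lying \emph{outside} an interval containing $E_0$, i.e.\ on both sides of it. These two sets are separated by an annulus centred at $E_0$, not by a strip, so the case you actually proved is not the one the paper invokes. Moreover, the route you sketch for the annulus fails as stated: with $S_1\subset\{|z-z_0|\le r\}$ and $S_2\subset\{|z-z_0|\ge r+\Delta\}$, the natural contour is the concentric circle of radius $r+\Delta/2$, for which $\tfrac{1}{2\pi}\oint_\Gamma \tfrac{|dz|}{\mathrm{dist}(z,S_1)\,\mathrm{dist}(z,S_2)}=\tfrac{r+\Delta/2}{(\Delta/2)^2}=\tfrac{4r+2\Delta}{\Delta^2}$, which exceeds $1/\Delta$ for every $r>0$; the crude resolvent-times-arclength estimate cannot recover the sharp constant. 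The correct companion argument solves the same Sylvester equation by a Neumann series instead of a contour integral: after translating $z_0$ to the origin, $EF=-\sum_{n\ge 0}O^nE\,\big(E(O-P)F\big)\,FP^{-(n+1)}$, and normality gives $\|O^nE\|\le r^n$ and $\|P^{-(n+1)}F\|\le (r+\Delta)^{-(n+1)}$, so the geometric series sums to exactly $\tfrac{1}{r+\Delta}\cdot\big(1-\tfrac{r}{r+\Delta}\big)^{-1}=\tfrac1\Delta$. With that replacement (and taking adjoints if $S_2$ rather than $S_1$ is the bounded set) your proof would be complete.
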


In our notation, this theorem 
implies 
\begin{equation} \label{eq:bhati} 
 \left\| \Pi^\bot \ket{E_l^S, E_\lambda^B} \right\| = \left|\left| \Pi^\bot  \ket{E_l^S, E_\lambda^B }\bra{E_l^S, E_\lambda^B} \right|\right| \le \frac{\|H_I\|}{\Delta_{l,\lambda}},
\end{equation}
where we have related the euclidean vector norm on the left to the operator norm on the right.  In the above expression $\Delta_{l,\lambda}$ is the distance between the supports of $\ket{E_l^S, E_\lambda^B }\bra{E_l^S, E_\lambda^B}$ and $\Pi^\bot$. Note that this distance satisfies  $\Delta_{l,\lambda}~\ge~\frac{\eta}{2}$.

Using equations~(\ref{eq:apptruncationAUX3}) and~(\ref{eq:bhati}), the fact that 
\begin{equation} 
\Bigg( \sum_{l,l'} |\rho_{ll'}^S |\Bigg)^2 \leq d_S^2 \sum_{l,l'} |\rho_{ll'}^S|^2 = d_S^2 \tr{\rho_S^2} \leq d_S^2,
\end{equation} and $\sum_\lambda c_{\lambda} =1$,   we get  
\begin{align}
1-\tr{\rho_0 \Pi} &\le \sum_{\lambda,l,l'}  c_\lambda  \left| \rho_{ll'}^S \right| \frac{4\|H_I\|^2}{\eta^2} \nonumber \\
&\le d_S \frac{4\|H_I\|^2}{\eta^2}.
\end{align}

Choosing the truncation window with $\eta~=~K \sqrt{8 d_S }\|H_I\| $ and using the gentle measurement lemma leads to the main result
\begin{align}
\label{eq3:appaux7}
\left\| \rho_0 - \Pi \rho_0 \Pi  \right\|_1 & \le  \frac{2}{K}.
\end{align}

It is easy to extend this to the weak-distinguishability.
Note that the trace distance is invariant under global unitaries, in particular invariant under the Hamiltonian evolution. Therefore, expectation values of an observable $A$ will be close, since we have
\begin{align}
\left| \tr{A \rho_t } - \tr{A \Pi \rho_t  \Pi} \right| &\le \| \rho_t  - \Pi \rho_t  \Pi \|_1 \|A\| \nonumber \\
&= \| \rho_0 - \Pi \rho_0 \Pi \|_1 \|A\| \nonumber \\
&\le \frac{2\|A\|}{K},
\end{align}
by using in the first step that, for any two operators $O$ and $P$, Holder's inequality implies $\tr{OP} \le \| O P \|_1 \le \| O \|_1 \| P \|$.

From above we obtain
\begin{align}
\label{eq:app:auxdistance}
\mathcal{\widetilde D}_A \left( \rho_t , \Pi \rho_t  \Pi \right) &\le \frac{1}{K^2}.
\end{align}

To conclude, it will be useful to relate the rank of $\Pi$ to the rank of a projector $P$ onto yet another extended subspace $\mathcal{H}^{E_B,\Delta + 2 \| H_S \| + \eta + \eta'}_{H_B + H_S}$. Since  the rank of $P$ is related to the spectrum of unperturbed system and bath Hamiltonians it will prove simpler to calculate. 

First, denoting by $P^\bot$ the orthogonal projector to $P$ we get 
\begin{align}
\| \Pi - \Pi P\|_1 &= \| \Pi P^\bot \|_1 \nonumber \\
&\le \text{rank} \left( \Pi \right) \| \Pi P^\bot \| \nonumber \\
&\le \| \Pi \|_1 \frac{2 \|H_I\|}{ \eta' }  = \frac{\| \Pi \|_1}{K},
\end{align}
by using the fact that for any operator $\| Q \|_1 \le \text{rank} \left( Q \right) \|Q \|$, Bhatia's theorem, and setting $\eta' = 2 K \| H_I \|$. 
The triangle inequality then leads to
\begin{align}
\| \Pi \|_1 &\le \| \Pi P \|_1 +  \| \Pi - \Pi P \|_1 \nonumber \\
&\le \| P \|_1 + \frac{\| \Pi \|_1}{K}.
\end{align}  
Recall that $d_\text{trunc} = \text{rank}\left( \Pi \right) = \| \Pi \|_1$. Hence
\begin{align}
d_\text{trunc} \le \frac{\| P \|_1}{1-\tfrac{1}{K}}.
\end{align}

Note that, since $P$ projects onto the subspace $\mathcal{H}^{E_B,\Delta + 2 \| H_S \| + \eta + \eta'}_{H_B + H_S}$ corresponding to a system that does not interact with the bath, we can denote the density of states of the bath by $\nu_B(E)$, and
\begin{align}
\| P \|_1 &\le d_S \int_{E_B - \tfrac{\Delta}{2} - \tfrac{\eta}{2} - \tfrac{\eta'}{2} - \|H_S\|  }^{E_B + \tfrac{\Delta}{2} + \tfrac{\eta}{2} + \tfrac{\eta'}{2} + \|H_S\|  } \nu_B(E) dE  \\
& = d_S \int_{E_B - \tfrac{\Delta}{2} -  \left( 1 + \sqrt{2d_s} \right) K \|H_I\| -\|H_S\|  }^{E_B + \tfrac{\Delta}{2} +\left( 1 + \sqrt{2d_s} \right) K \|H_I\| + \|H_S\|  } \nu_B(E) dE. \nonumber
\end{align}
The inequality comes from upper bounding the number of accessible states of the system by $d_S$, and counting the states of the bath as if it could access all of the possible energies of the space $\mathcal{H}^{E_B,\Delta + 2\| H_S \| + \eta + \eta'}_{H_B + H_S}$. The second line comes from using $\eta' = 2 K \| H_I \|$ and $\eta = \sqrt{8d_S} K \| H_I \|$.

\section{Proof details of Theorem~\ref{theorem-boundmc}}
\label{app:truncation-commutators}

We write the original state in the basis $\ket{j}$ of eigenvectors of the full Hamiltonian $H$ as
\begin{equation}
\rho_0 = \sum_{jk} \rho_{jk}  \ket{j} \bra{k}.
\end{equation}
The truncated state is then
\begin{equation}
\Pi \rho_0 \Pi = \sum_{jk \in J} \rho_{jk}  \ket{j} \bra{k},
\end{equation}
where $\Pi$ projects to the truncated Hilbert space $\mathcal{H}_{H_S+H_B+H_I}^{E_B,\Delta + 2 \|H_S\| + \eta}$, and $J$ is the set of eigenvalues of $H_S+H_B+H_I$ restricted to such space.

We now expand the Hilbert space to two include two new energy eigenvectors $\ket{j_{\min}}$ and $\ket{j_{\max}}$, with corresponding energies $E_{\min}$ and $E_{\max}$ respectively,  such that the new Hamiltonian is 
\begin{equation} 
\overline{H} \equiv H + E_{\min} \proj{j_{\min}} + E_{\max} \proj{j_{\max}}. 
\end{equation}
Next, we define a new density matrix on the enlarged space by 
\begin{align}
\overline{\rho}_0 &\equiv \zeta \left( \Pi \rho_0 \Pi + \frac{|x|}{2} \ket{j_{\min}} \bra{j_{\min}} + \frac{x}{2} \ket{j_{\min}} \bra{j_{\max}} \right. \nonumber \\ &\qquad \left. \qquad+ \frac{x}{2} \ket{j_{\max}} \bra{j_{\min}} + \frac{|x|}{2} \ket{j_{\max}} \bra{j_{\max}} \right),
\end{align}
where $x$ is a real constant, and $\zeta$ is an appropriate normalization constant to ensure that $\tr{\overline{\rho}_0}=1$. The above definition ensures that $\overline{\rho}_0$ remains a positive  operator.
Note also that $\Pi$ is orthogonal to $\ket{j_{\min}}$ and $\ket{j_{\max}}$, since the truncation we perform is in the original Hilbert space.

It will also be useful to define an observable $\overline{A}$ in a similar way, as
\begin{align}
\overline{A}  &\equiv \frac{1}{\zeta} \left( \Pi A \Pi + \frac{\|A\|}{2} \ket{j_{\min}} \bra{j_{\min}} + \frac{\|A\|}{2} \ket{j_{\min}} \bra{j_{\max}} \right. \nonumber \\ &\qquad \left. \quad+ \frac{\|A\|}{2} \ket{j_{\max}} \bra{j_{\min}} + \frac{\|A\|}{2} \ket{j_{\max}} \bra{j_{\max}} \right).
\end{align}

In the derivation of Theorem~\ref{theorem-genboundtime}, the commutators which appear in the denominator of equation~(\ref{eq3:genboundtime}) came from the standard deviation of the distribution $p_\alpha$, as explained in 
Section~\ref{sec:obsdepbound}. 
We intend to use $\overline{\rho}_0$ and $\overline{A}$ as the initial state and observable for our calculation, while proving that $x$ can be taken such that:
\begin{enumerate}[label=(\roman*)]

\item  
\label{auxstatespoint1}
$\qquad \overline{\sigma}_G^2 \ge \frac{1}{\overline{Q} \|\overline{A}\| }\Big|   \text{Tr}\Big(\big[ \left[\rho_0  ,H \right] , H\big] A  \Big) \Big|,$

\item
\label{auxstatespoint2}
$\left\{ \overline{\rho}_0,\overline{A}, \overline{H} \right\}$ lead to approximately the  same physics as $\left\{ \rho_0,A, H\right\}$, 

\end{enumerate}
where $\overline{Q}$ is the normalization factor for the distribution $\overline{p}_\alpha$ corresponding to $\overline{\rho}_0$ and $\overline{A}$.

Point \ref{auxstatespoint1} will lead to a result in Theorem \ref{theorem-boundmc} with commutators involving the original triple $\left\{ \rho_0,A, H\right\}$, as desired, 
while \ref{auxstatespoint2} allows to  use these redefined state, observable and Hamiltonian instead of the original ones.

\subsection{Going from $\left\{ \overline{\rho}_0,\overline{A}, \overline{H} \right\}$ to commutators involving $\left\{ \rho_0,A, H\right\}$ \label{app:truncation-commutators-1} }

Let us revisit equation~(\ref{eq:app-sigma}), this time with the state $\overline{\rho}_0$ and the observable $\overline{A}$. We can see
\begin{align}
\label{eq:app-overlinesigma1}
&\overline{\sigma}_G^2 = \sum_\alpha \overline{p_\alpha} G_\alpha^2  = \frac{1}{\overline{Q} \|\overline{A}\|} \sum_{j \ne k } |\overline\rho_{jk} | |\overline A_{kj}| (E_j-E_k)^2 \nonumber \\
&= \frac{1}{\overline{Q}\|\overline{A}\|}\left( \sum_{jk \in J} |\rho_{jk} | |A_{kj}| (E_j-E_k)^2 + 2 \frac{|x|}{2} \frac{\|A\|}{2} G_{\max}^2 \right) \nonumber \\
&\ge \frac{1}{\overline{Q} \| \overline{A} \|} \left( \Big|   \sum_{jk \in J} \rho_{jk}   A_{kj} (E_j-E_k)^2   \Big|   +  \frac{x \|A\|}{2} G_{\max}^2 \right), 
\end{align}
where we defined the maximum energy gap $G_{\max} = E_{\max} - E_{\min}$.

We now \emph{impose} that $x$ is such that
\begin{align}
\label{eq:app-overlinesigma2}
&\frac{1}{\overline{Q} \| \overline{A} \|} \left( \Big|   \sum_{jk \in J} \rho_{jk}   A_{kj} (E_j-E_k)^2   \Big|   +  \frac{x \|A\| }{2} G_{\max}^2 \right) \nonumber \\
&\qquad \equiv \frac{1}{\overline{Q} \|\overline{A}\| } \Big|   \sum_{jk} \rho_{jk}   A_{kj} (E_j-E_k)^2   \Big| \nonumber \\
&\qquad = \frac{1}{\overline{Q} \|\overline{A}\| }\Big|   \text{Tr}\Big(\big[ \left[\rho_0  ,H \right] , H\big] A  \Big) \Big|,
\end{align}
which together with eq.~(\ref{eq:app-overlinesigma1}) already gives condition \ref{auxstatespoint1}.
 
From the equation above we obtain
\begin{align}
& x \frac{\|A\| G_{\max}^2}{2}  = \nonumber \\
& \quad \Big|   \sum_{jk} \rho_{jk}   A_{kj} (E_j-E_k)^2   \Big| -  \Big|   \sum_{jk \in J} \rho_{jk}   A_{kj} (E_j-E_k)^2   \Big| \nonumber \\
& \le \Big|   \sum_{jk} \rho_{jk}   A_{kj} (E_j-E_k)^2   -  \sum_{jk \in J} \rho_{jk}   A_{kj} (E_j-E_k)^2   \Big| \nonumber \\
& \qquad  = \Big|   \text{Tr}\Big(\big[ \left[\rho_0 - \Pi \rho_0 \Pi , H \right] , H\big] A  \Big) \Big| \nonumber \\
& \qquad  = \Big|   \text{Tr}\Big(\big[ \left[\rho_0 - \Pi \rho_0 \Pi , H' \right] , H'\big] A  \Big) \Big|,
\end{align}
where in the last line we define the auxiliary Hamiltonian $H' \equiv H - \tfrac{E_{\max} + E_{\min}}{2} \id$, shifted so that its spectrum is centered around $0$.
By noting that $\|H'\| = \|H\| - \tfrac{E_{\max} + E_{\min}}{2} = \tfrac{E_{\max} - E_{\min}}{2} = \tfrac{G_{\max}}{2}$, expanding the four terms in the above commutators, using that for any two operators $O$ and $P$ one has $\tr{O P} \le \|O\| \|P\|_1$ from Holder's inequality, and equation~(\ref{eq3:appaux7}), we end up with
\begin{align}
x \  &\le \frac{2}{\|A\| G_{\max}^2} 4 \| \rho_0 - \Pi \rho_0 \Pi \|_1 \|A\|  \|H'\|^2 \nonumber \\
& = 2 \| \rho_0 - \Pi \rho_0 \Pi \|_1 \nonumber \\
& \le \frac{4}{K}.
\end{align}
Similarly, we can show that $-x \leq \frac{4}{K}$, and hence that 
\begin{equation}
|x| \leq \frac{4}{K}.
\end{equation}

\subsection{$\left\{ \overline{\rho}_0,\overline{A}, \overline{H} \right\}$ give approximately the \protect\\ same physics as $\left\{ \rho_0,A, H\right\}$}
\label{app:truncationmc-redefinedstates}

We now check that condition \ref{auxstatespoint2} is also satisfied.
Note that
\begin{align}
&\big| \tr{\rho_0 A} - \tr{ \overline{\rho}_0 \overline{A}} \big|  \le \nonumber \\
&\big| \tr{\rho_0 A} - \tr{ \Pi \rho_0 \Pi A} \big| + \big| \tr{ \Pi \rho_0 \Pi A} - \tr{ \overline{\rho}_0 \overline{A} } \big| \nonumber \\
&= \big| \tr{\rho_0 A} - \tr{ \Pi \rho_0 \Pi A} \big| + \left|  \frac{x \|A\|}{2} +  \frac{|x|    \|A\|}{2} \right| \nonumber \\
& \qquad \quad\le \|  \rho_0 - \Pi \rho_0 \Pi \|_1   \|A\| + |x|  \|A\| \nonumber \\
& \qquad \quad \le \frac{2 \|A\|}{K} + \frac{4 \|A\|}{K}, 
\end{align}
by using in the third line that $\tr{\overline{\rho}_0 \overline{A}} = \tr{A \Pi \rho_0 \Pi} + 2 \frac{x \|A\|}{4} + 2 \frac{|x|  \|A\|}{4} $ (which comes from the definition of $\overline{A}$ and $\overline{\rho}_0$), and the fact that $\Pi$ is orthogonal to $\ket{j_{\min}}$ and $\ket{j_{\max}}$.

The above equation justifies the approach of defining auxiliary state and observable, since we proved that these mimic the original state and observable in the predictions.

The result can be translated into the weak-distinguishability between $\rho_t $ and $\omega $. By the triangle inequality, a similar calculation as above, and the fact that the trace distance is invariant under unitary evolution, we see
\begin{align}
& \left| \tr{\rho_t  A} - \tr{\omega  A} \right|^2 \le \left| \tr{\rho_t  A} - \tr{\overline{\rho}_t  \overline{A}} \right|^2 \nonumber \\
&\qquad \qquad \qquad \qquad + \left| \tr{\overline{\rho}_t  \overline{A}} - \tr{\overline{\omega}  \overline{A}} \right|^2 \nonumber \\
&\qquad \qquad \qquad \qquad + \left| \tr{\overline{\omega}  \overline{A}} - \tr{\omega  A} \right|^2 \nonumber \\
& \qquad \le \left| \tr{\overline{\rho}_t  \overline{A}} - \tr{\overline{\omega}  \overline{A}} \right|^2 + 2\left(\frac{6 \|A\|}{K} \right)^2,
\end{align}
where $\overline{\rho}_t  = e^{-i \overline{H}t} \rho_t  e^{i \overline{H}t}$ and $\overline{\omega} $ is the corresponding dephased state. This implies
\begin{align}
 \label{eq:app:auxdistingmc1}
\mathcal{\widetilde D}_A \left( \rho_t , \omega  \right) &\le \frac{\| \overline{A} \|^2}{\| A \|^2} \mathcal{\widetilde D}_{\overline{A}} \left( \overline{\rho}_t , \overline{\omega}  \right) + \frac{18}{ K^2}.
\end{align}

For the term $\mathcal{\widetilde D}_{\overline{A}} \left( \overline{\rho}_t , \overline{\omega}  \right)$ we can apply Propositions~\ref{proposition-xibound} and~\ref{proposition-xidependence}, and~condition\ref{auxstatespoint1} to get 
\begin{align}
        \label{eq:app:auxdistingmc2}
      & \left\langle \mathcal{\widetilde D}_{\overline A} \left( \overline{\rho}_t  , \overline{\omega}  \right) \right\rangle_T \le \frac{\pi\, a(\epsilon) \overline Q^{2}}{T \overline{\sigma}_G} + \pi\, \delta(\epsilon) \overline Q^2  \\
        & \qquad \qquad \le \frac{\pi \| \overline{A} \|^{1/2} a(\epsilon) \overline Q^{5/2}}{T \sqrt{\Big| \text{Tr}\Big( \big[ \left[ \rho_0 ,H \right] , H\big] A \Big) \Big|}} + \pi\, \delta(\epsilon) \overline Q^2. \nonumber
    \end{align}
The last inequality comes from our convenient construction of $\overline{\rho}_0$ and $\overline{A}$, specifically designed for this.

\subsection{The factor $\overline{Q}$ for $\overline{\rho}_0$ and $\overline{A}$}
\label{app:truncationmc-redefinedstatesQfactor}

It is easy to see that the factor $\overline Q$ for the auxiliary state and observable satisfies
\begin{align}
\overline Q &\equiv \sum_{j \ne k } |\overline\rho_{jk} | \frac{|\overline A_{kj}|}{\|\overline{A}\|} \nonumber \\
& = \sum_{jk \in J} |\rho_{jk} | \frac{|A_{kj}|}{\|\overline{A}\|} + 2\frac{|x|}{2}\frac{\|A\|}{2\|\overline{A}\|} \nonumber \\
&  = \frac{\| A \|}{\| \overline{A} \|}\left( Q_\text{trunc} + \frac{|x|}{2} \right) \nonumber \\
&  \le \frac{\| A \|}{\| \overline{A} \|}\left( Q_\text{trunc} + \frac{2}{K} \right),
\end{align}
where $Q_{\text{trunc}}$ is the normalization constant of the distribution that results from $\Pi \rho_0 \Pi$ and $A$.

The above bound, plus equations~(\ref{eq:app:auxdistingmc1}) and~(\ref{eq:app:auxdistingmc2}), results in
\begin{align}
\left\langle \mathcal{\widetilde D}_A \left( \rho_t , \omega  \right) \right\rangle_T &\le \frac{\| \overline{A} \|^2}{\| A \|^2} \frac{\pi\, a(\epsilon)  \| \overline{A} \|^{1/2} \overline Q^{5/2}}{T \sqrt{\Big| \text{Tr}\Big( \big[ \left[ \rho_0 ,H \right] , H\big] A \Big) \Big|}} \nonumber \\
&  + \frac{\| \overline{A} \|^2}{\| A \|^2}  \pi\, \delta(\epsilon) \overline Q^2  + \frac{18}{ K^2} \nonumber \\
& \le \frac{\pi\, a(\epsilon)  \| A \|^{1/2} \left( Q_\text{trunc} + \tfrac{2}{K} \right)^{5/2}}{T \sqrt{\Big| \text{Tr}\Big( \big[ \left[ \rho_0 ,H \right] , H\big] A \Big) \Big|}} \nonumber \\
&  +  \pi\, \delta(\epsilon) \left( Q_\text{trunc} + \tfrac{2}{K} \right)^2  + \frac{18}{ K^2},
\end{align}
which, defining $Q_2 = Q_\text{trunc} + \tfrac{2}{K}$ to simplify notation, 
gives the first part of
Theorem~\ref{theorem-boundmc}.

In order to finish the Theorem's proof we upper bound $Q_\text{trunc}$. From equation~(\ref{eq:app-Qbound}) we see that for this state
\begin{align}
Q_\text{trunc}^2 &\le \tr{ \left( \Pi \rho_0 \Pi \right)^2 } \tr{\Pi}  \nonumber\\
&= \tr{  \Pi \rho_0 \Pi \rho_0 }  d_{\text{trunc}} \nonumber \\
&\le \sqrt{\tr{  \Pi  \rho_0^2 \Pi } \tr{ \rho_0\Pi^2 \rho_0  }} d_{\text{trunc}} \nonumber \\
&\le \|\Pi\| \tr{ \rho_0^2} d_\text{trunc}  \nonumber \\
&=\tr{ \rho_0^2} d_\text{trunc} ,
\end{align}
by using the definition of $d_\text{trunc}$, the Cauchy-Schwarz inequality, and the fact that for two positive semidefinite matrices $\tr{OP} \le \|O\| \tr{P}$.
  
Since $\rho_0 = \rho_S\otimes \frac{\id_B^\Delta}{d_B^\Delta}$ we see
\begin{align}
\tr{ \rho_0^2} = \frac{\trs{\rho_S^2}}{d_B^\Delta} .
\end{align}

In the main text we found (eq.~(\ref{eq:truncateddimension}))
\begin{equation}
d_\text{trunc} \le\frac{d_S d_B^\Delta}{\left(1-\tfrac{1}{K}\right)\left(1-e^{-\beta \Delta} \right)} e^{ \beta \|H_S\| +  \big( 1 + \sqrt{2d_s} \big) K \beta \|H_I\| } ,
\end{equation}
which leads to
\begin{align}
&Q_\text{trunc}^2 \le \tr{ \rho_0^2} d_\text{trunc}  \\
& \le   \frac{d_S \trs{\rho_S^2}}{\left(1-\tfrac{1}{K}\right)\left(1-e^{-\beta \Delta} \right)} e^{ \beta \|H_S\| +  \big( 1 + \sqrt{2d_s} \big) K \beta \|H_I\| } . \nonumber
\end{align}
Substituting this back into the definition for $Q_2$ gives eq.~(\ref{eq:boundmc-Qfactor}).

\section{Typical behaviour for environment in a pure initial state -- Proposition~\ref{proposition-typicalbath}}
\label{app3:typical-bath}

With the initial state
\begin{equation}
\rho_0^U = \rho_S \otimes U \ket{\psi} \bra{\psi} U^\dag = \rho_S \otimes \rho_B^U
\end{equation}
we focus on the average over all unitaries $U$ within the subspace $ \mathcal{H}_{H_B}^{E_B,\Delta} $of $\mathcal{\widetilde D}_A\big(\rho_t^U,\omega^U\big)$:
\begin{equation}
\big\langle \mathcal{\widetilde D}_A\big(\rho_t^U,\omega^U\big) \big\rangle_U = \frac{\Big\langle \Big| \tr{\rho_t^U A} - \tr{\omega^U A} \Big|^2   \Big\rangle_U}{4\|A\|^2}.
\end{equation}

By shifting the time dependencies to the observable $A$ we get
\begin{equation}
\tr{\rho_t^U A} = \tr{\rho_0^U e^{i H t} A e^{-i H t}} \equiv \tr{\rho_0^U A(t)}.
\end{equation}
If $A_{eq}$ is the infinite time averaged $A(t)$, we have $\tr{\omega^U A} = \tr{\rho_0^U A_{eq}}$.
Then
\begin{align}
&\big\langle \mathcal{\widetilde D}_A\big(\rho_t^U,\omega^U\big) \big\rangle_U = \frac{1}{4 \|A\|^2} \left\langle \left| \tr{\rho_0^U \left( A(t) - A_{eq} \right) } \right|^2   \right\rangle_U   \nonumber \\
&\qquad= \frac{1}{4 \|A\|^2} \Big\langle \Big| \trb{\rho_B^U \trs{\rho_S \otimes \id_B \big( A(t)-A_{eq} \big)} } \Big|^2   \Big\rangle_U \nonumber \\
&\qquad= \Big\langle \Big| \trb{\rho_B^U C} \Big|^2 \Big\rangle_U  \nonumber \\
&\qquad= \text{Tr}_{B^{\otimes2 }} \big[\big\langle \rho_B^U\otimes \rho_B^U \big\rangle_U C \otimes C \big],
\end{align}
where in the third line we have used $\rho_B^U=\id_B^\Delta \rho_B^U\id_B^\Delta$ to write $C \equiv \frac{1}{2\|A\|} \trs{\rho_S \otimes \id_B^\Delta \big( A(t)-A_{eq} \big) \id_S \otimes \id_B^\Delta }$ as an observable acting on the microcanonical window of the bath Hilbert space.

Via the same calculations as used in the Appendix of~\cite{Malabarba14}, we get
\begin{align}
\big\langle \rho_B^U\otimes \rho_B^U \big\rangle_U &= \Big\langle U^{\otimes2} \left( \ket{\psi}\bra{\psi}  \otimes \ket{\psi}\bra{\psi} \right) (U^{\otimes2})^\dag \Big\rangle_U   \nonumber \\
&=\alpha \Pi_s + \beta \Pi_a,
\end{align}
where $\Pi_s = \frac{(\id_B^\Delta)^{\otimes2} + \SWAP}{2}$ and $\Pi_a = \frac{(\id_B^\Delta)^{\otimes2} - \SWAP}{2}$ project onto the symmetric and antisymmetric subspaces, and $\SWAP$ is the swap operator on $\mathcal{H}_{H_B}^{E_B,\Delta} \otimes \mathcal{H}_{H_B}^{E_B,\Delta} $, defined by $\SWAP\, \ket{\phi_1} \ket{\phi_2} = \ket{\phi_2} \ket{\phi_1} $. Since 
\begin{align} 
\tr{\big\langle \rho_B^U\otimes \rho_B^U \big\rangle_U \Pi_a} &= \frac{1}{2}\big\langle \left( 1 -\tr{\rho_B^U\otimes \rho_B^U \SWAP} \right) \big\rangle_U \nonumber \\
 &= \frac{1}{2}\big\langle \left( 1 - \tr{(\rho_B^U)^2} \right) \big\rangle_U \nonumber \\
&= 0
\end{align} 
we see that $\beta=0$, and from $\tr{\big\langle \rho_B^U\otimes \rho_B^U \big\rangle_U} =1$  we obtain  $\alpha = \frac{2}{d_B^\Delta(d_B^\Delta+1)}$, which leads to the simple expression $\big\langle \rho_0^U\otimes \rho_0^U \big\rangle_U  = \frac{2}{d_B^\Delta(d_B^\Delta+1)}\Pi_s $.
Then
\begin{align}
&\big\langle \mathcal{\widetilde D}_A\big(\rho_t^U,\omega^U\big) \big\rangle_U = \frac{2}{d_B^\Delta(d_B^\Delta+1)} \trx{\Pi_s C\otimes C}{{B}^{\otimes2}}  \nonumber  \\
&= \frac{1}{d_B^\Delta(d_B^\Delta+1)} \bigg(  \trx{(\id_B^\Delta)^{\otimes 2} C\otimes C}{{B}^{\otimes2}}  \nonumber \\
& \qquad \qquad \qquad \qquad \qquad +   \trx{\SWAP C\otimes C}{{B}^{\otimes 2}}  \bigg)   \nonumber   \\
&= \frac{1}{d_B^\Delta(d_B^\Delta+1)} \Big(  \trb{C}^2  +   \trb{C^2}  \bigg). \label{eq3:appaux1}
\end{align}

The operator $C$ is of the form $C = \trs{O}$, with $O$ Hermitian and acting only on the microcanonical window, from its definition above. Any such operator can be written as
\begin{equation}
O = \sum_{j =0}^{d_S^2-1} \sum_{k=0}^{\left(d_B^\Delta\right)^2-1} a_{jk} X_j Y_k
\end{equation}
where $a_{jk}$ are real coefficients, and $\{ X_j \}$ and $\{ Y_k \}$ are orthonormal bases of Hermitian operators on system and microcanonical window respectively~\cite{Short11}. They satisfy
\begin{align}
&\trs{X_j  X_{j'}} = \delta_{j j'} , \quad \forall  \{j,j'\} = \left(0,\dots d_S^2-1\right)  \\
&\trb{Y_k  Y_{k'}} = \delta_{k k'} , \quad \forall  \{k,k'\} = \left(0,\dots (d_B^\Delta)^2-1\right), \nonumber
\end{align}
$X_0 = \frac{\id_S}{\sqrt{d_S}}$ and $Y_0 = \frac{\id_B^\Delta}{\sqrt{d_B^\Delta}}$, while all other operators have trace $0$.
With these definitions we can write
\begin{align}
 &\trb{C^2} = \trb{\trs{O}\trs{O}} \nonumber \\
 &\qquad= \sum_{j j' = 0}^{d_S^2-1} \sum_{k k' = 0}^{\left( d_B^\Delta \right)^2-1} a_{jk} a_{j'k'} \trb{\trs{ X_j }\trs{ X_{j'} } Y_k Y_{k'}} \nonumber \\
 &\qquad= \sum_{k k' = 0}^{\left(d_B^\Delta\right)^2-1} a_{0k} a_{0k'} \trb{ Y_k Y_{k'}} \trs{ X_0 } \trs{ X_0 } \nonumber \\
 &\qquad= d_S \sum_{k k' = 0}^{\left(d_B^\Delta\right)^2-1} a_{0k} a_{0k'} \delta_{k k'} = d_S \sum_{k = 0}^{\left(d_B^\Delta\right)^2-1} a_{0k}^2 \nonumber \\
 &\qquad\le d_S \sum_{j = 0}^{d_S^2-1} \sum_{k = 0}^{\left(d_B^\Delta\right)^2-1} a_{jk}^2  = d_S \tr{O^2}.
\end{align}
From this result, and the definition of $C$ above,
\begin{align}
 &\trb{C^2} \le  \frac{d_S}{4\|A\|^2} \tr{(\rho_S \otimes \id_B^\Delta \big( A(t)-A_{eq} \big))^2}  \nonumber \\
 &\quad\le  \frac{d_S}{4\|A\|^2} \sqrt{ \tr{(\rho_S \otimes \id_B^\Delta) \big( A(t)-A_{eq} \big)^2 (\rho_S \otimes \id_B^\Delta)}  } \nonumber \\
 &\qquad \quad \times \sqrt{ \tr{\big( A(t)-A_{eq} \big) (\rho_S \otimes \id_B^\Delta)^2 \big( A(t)-A_{eq} \big)}} \nonumber \\
 &\qquad \quad\quad \le  \frac{d_S}{4\|A\|^2} \tr{(\rho_S \otimes \id_B^\Delta)^2} \|A(t)-A_{eq}\|^2 \nonumber \\
 &\qquad \quad\quad\le d_S d_B^\Delta 
\end{align}
by using Cauchy-Schwarz inequality on line 2, the fact that for positive semidefinite operators $\tr{PQ} \le \tr{P} \|Q\|$ on line 3, and the triangle inequality plus $\|A_{eq}\|\le \|A(t)\| = \|A\|$ on the last line.

With this result we see that eq.~(\ref{eq3:appaux1}) turns into
\begin{align}
&\big\langle \mathcal{\widetilde D}_A\big(\rho_t^U,\omega^U\big) \big\rangle_U = \frac{1}{d_B^\Delta(d_B^\Delta+1)} \left(  \trb{C}^2  +   \trb{C^2}  \right) \nonumber \\
&\le \frac{1}{d_B^\Delta(d_B^\Delta+1)} \bigg(  \frac{1}{4\|A\|^2} \tr{\rho_S \otimes \id_B^\Delta \big( A(t)-A_{eq} \big)}^2  \nonumber \\
& \qquad \qquad \qquad \qquad \qquad \qquad \qquad +  d_S d_B^\Delta  \Bigg)    \nonumber \\
&= \frac{1}{d_B^\Delta(d_B^\Delta+1)} \Bigg( \frac{(d_B^\Delta)^2}{4\|A\|^2}  \tr{\rho_S\otimes\frac{\id_B^\Delta}{d_B^\Delta} \big( A(t)-A_{eq} \big)}^2  \nonumber \\
& \qquad \qquad \qquad \qquad \qquad \qquad \qquad +  d_S d_B^\Delta  \Bigg)    \nonumber \\
&= \frac{d_B^\Delta}{d_B^\Delta+1}  \frac{\tr{\rho_0 \big( A(t)-A_{eq} \big)}^2}{4\|A\|^2}  +   \frac{d_S}{d_B^\Delta+1},
\end{align}
where $\rho_0 \equiv \rho_S\otimes\frac{\id_B^\Delta}{d_B^\Delta}$. We can now shift the time dependence back to the state to get
\begin{align}
&\big\langle \mathcal{\widetilde D}_A\big(\rho_t^U,\omega^U\big) \big\rangle_U \le \frac{d_B^\Delta}{d_B^\Delta+1}  \frac{\tr{\left( \rho_t  - \omega  \right) A}^2}{4 \|A\|^2}  \nonumber \\
& \qquad \qquad \qquad \qquad \qquad+   \frac{d_S}{d_B^\Delta+1}      \nonumber \\
& \qquad \le \frac{\tr{\left( \rho_t  - \omega  \right) A}^2}{4\|A\|^2}  +   \frac{d_S}{d_B^\Delta} \nonumber \\
& \qquad = \mathcal{\widetilde D}_A\big(\rho_t ,\omega  \big)  +   \frac{d_S}{d_B^\Delta},
\end{align}
which proves our claim. The bound $\frac{1}{d_B^\Delta+1} \le \frac{1}{d_B^\Delta}$ is only for presentation reasons and does not change the result much since $d_B^\Delta \gg d_S >1 $ in the regime we are interested in.

\section{The distribution $p_\alpha$ for a spin ring}
\label{app:simulation}

In order to get a deeper grasp of the behaviour of $p_\alpha$ we simulated $L$ interacting spin $1/2$'s, with a Hamiltonian given by
\begin{align}
H = \Omega \sum_{\lambda = 1}^L \sigma_\lambda^z + \gamma \Omega \sum_{\lambda = 1}^{L} \sigma_\lambda^x \otimes \sigma_{\lambda + 1}^x,
\end{align}
where $\sigma_\lambda^z$ and $\sigma_\lambda^x$ are the Pauli $z$ and $x$ operators for the spin $\lambda$, and we adopt the notation $\sigma_{L + 1}^x = \sigma_1^x$. 
The spin $\lambda = 1$ is taken to represent the system $\mathcal{S}$, and we focus on an observable $A_x$ and initial state $\rho$ given by  
\begin{align}
A_x = \sigma_1^x \otimes \bigotimes_{\lambda = 2}^L \id_\lambda, \qquad \rho_0 = \ket{1} \bra{1} \otimes \bigotimes_{\lambda = 2}^L \frac{\id_\lambda}{2},
\end{align}
the latter representing a bath in a maximally mixed initial state and the system in the eigenvector $\ket{1}$ of $\sigma_1^z$ with eigenvalue $1$. 

Figure~\ref{fig:physicaltimescales-simulation} depicts the normalized
distribution $p_\alpha = \tfrac{1}{Q} \tfrac{ \big| \rho_{jk} A_{kj} \big|  }{\|A_x\|}$ as a function of the energy gaps $G_\alpha = (E_j - E_k)$ as the 
number of spins increases, illustrating the transition between a distribution with several distinct peaks and a unimodal distribution.
In Figure~\ref{fig:physicaltimescales-simulation-adelta} we plot $a(\epsilon)$ and $\delta(\epsilon)$ for different values of $L$, illustrating their decrease with increasing $L$, for most values of $\epsilon$. 
Thus, even for moderate sizes of the bath, one can find an interval $\epsilon$ such that $\delta(\epsilon) \ll 1$ and $a \sim 1$. 
The first condition is necessary for Theorem~\ref{thm:simplethem} and the subsequent results to imply that equilibration occurs, while the second condition is necessary to  ensure that the equilibration timescale
does not grow for increasing bath sizes (see discussion after Theorem~\ref{theorem-boundmc}). 

For example, in this model we find that for $L = 9$ one can take $\epsilon$ such that $a(\epsilon) = 1$ and $\delta(\epsilon) \approx 0.006$. Then Theorem~\ref{thm:simplethem} gives
\begin{align}
      \left\langle \mathcal{\widetilde D}_{A} \left( \rho_t ,\omega  \right) \right\rangle_T
&\le \frac{ 2^{5/4} \pi  \|A_S\|^{1/2}  }{T \sqrt{ \bigg| \textnormal{Tr}\Big( \Big[ \big[\rho_0,H_S\!+\!H_I \big] , H_S\!+\! H_I \Big] A_S \Big) \bigg|} } \nonumber \\
      &\quad + 0.04, 
  \end{align}
with an upper bound on the equilibration time scale 
\begin{equation} 
 T_{\text{eq}} \equiv \frac{2^{5/4} \pi {\|A_S\|}^{1/2} }{\sqrt{\bigg| \text{Tr}\Big( \Big[ \big[\rho_0,H_S\!+\! H_I \big] , H_S\!+\! H_I\Big] A_S \Big) \bigg|}}, 
\end{equation}
dictated by the observable, initial state and system and interaction Hamiltonians, which is straightforward to calculate.

\begin{figure*}
\centering
\textbf{Distribution $p_\alpha$ for observable $A_x$ on translationally invariant spin ring 
\
\vspace{8pt}
\\
}
\begin{subfigure}{.5\textwidth}
  \centering
  \includegraphics[width=1\textwidth]{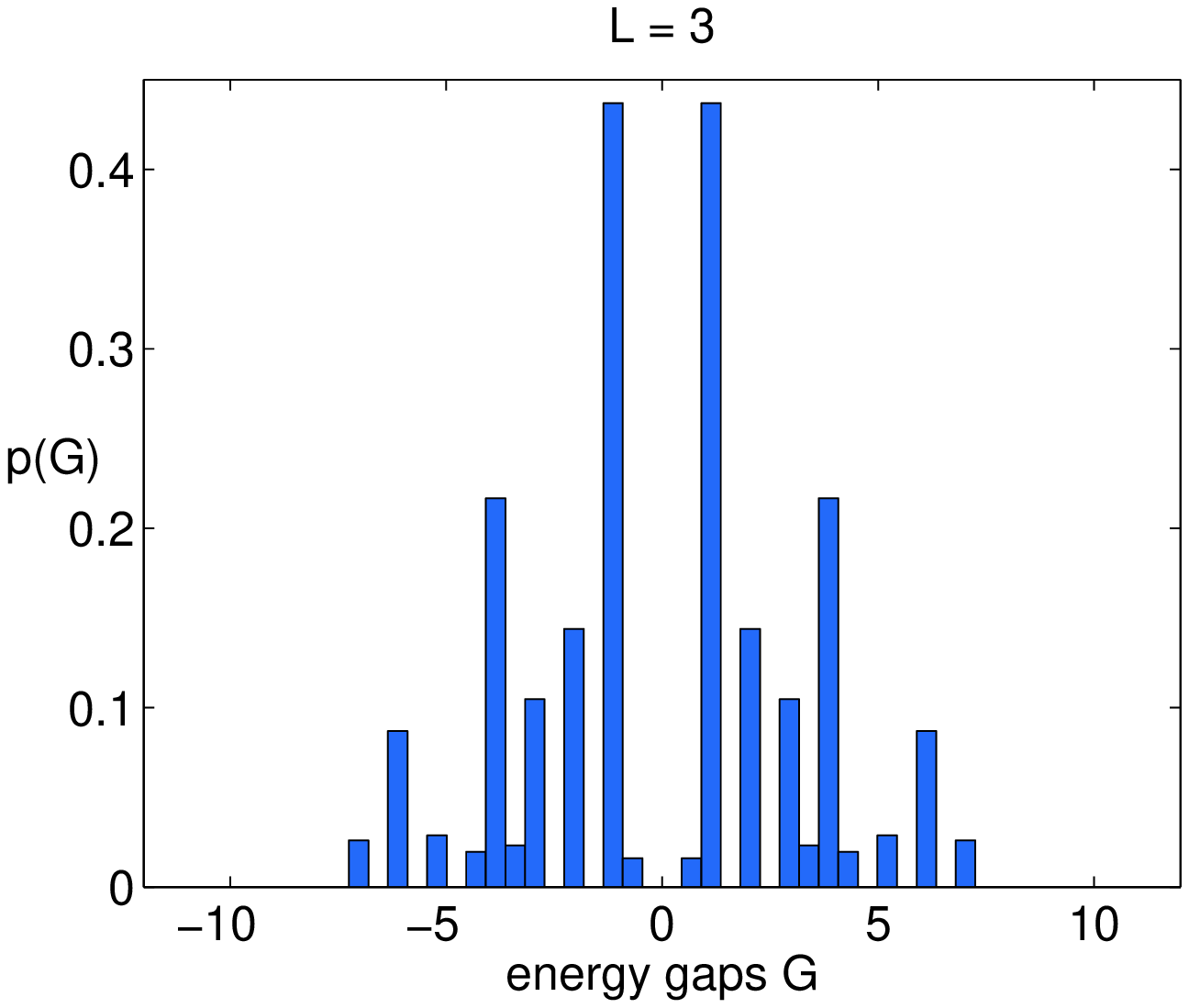}
\end{subfigure}%
\begin{subfigure}{.5\textwidth}
  \centering
  \includegraphics[width=1\textwidth]{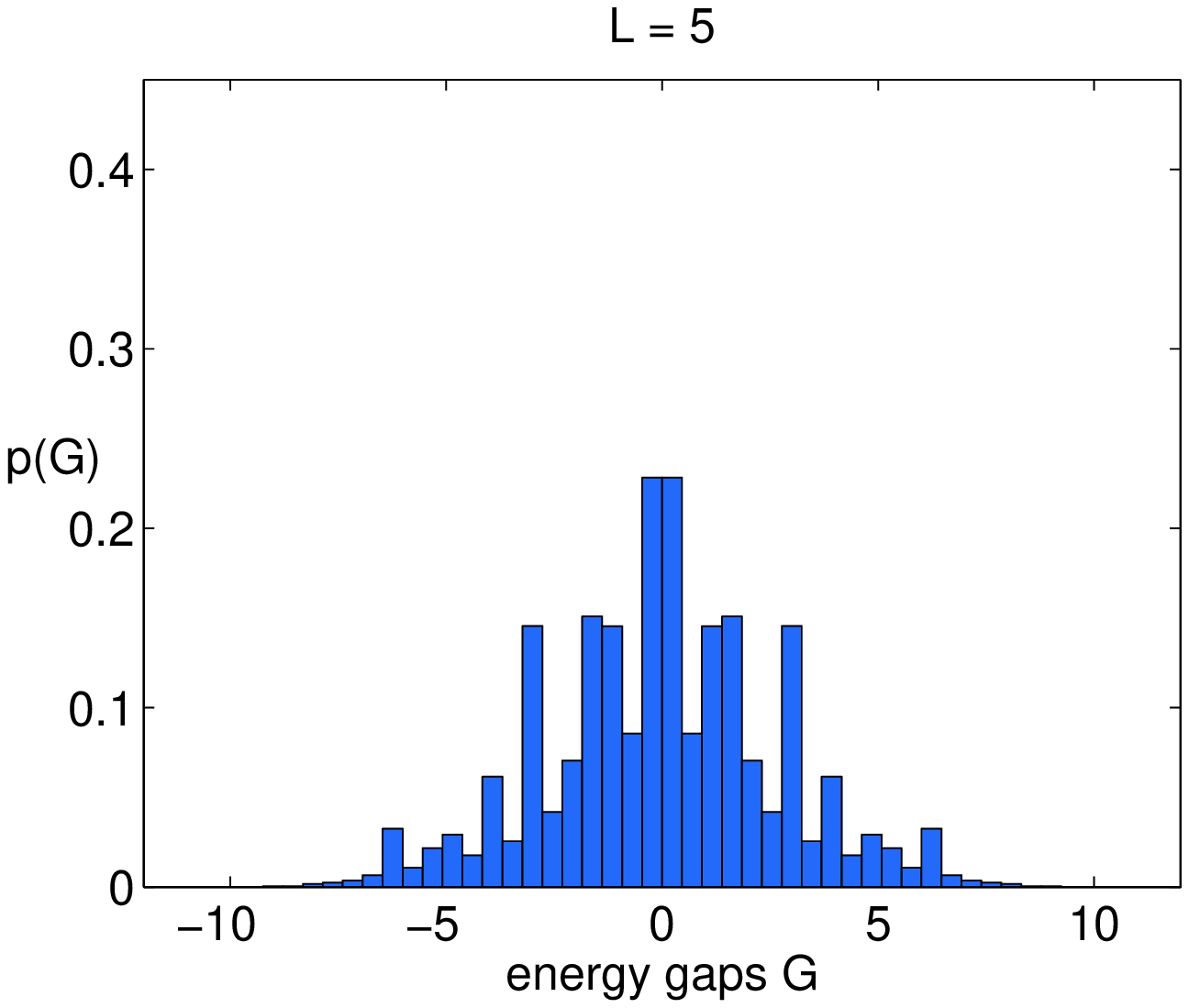}
\end{subfigure}
\\
\vspace{11pt}
\centering
\begin{subfigure}{.5\textwidth}
  \centering
  \includegraphics[width=1\textwidth]{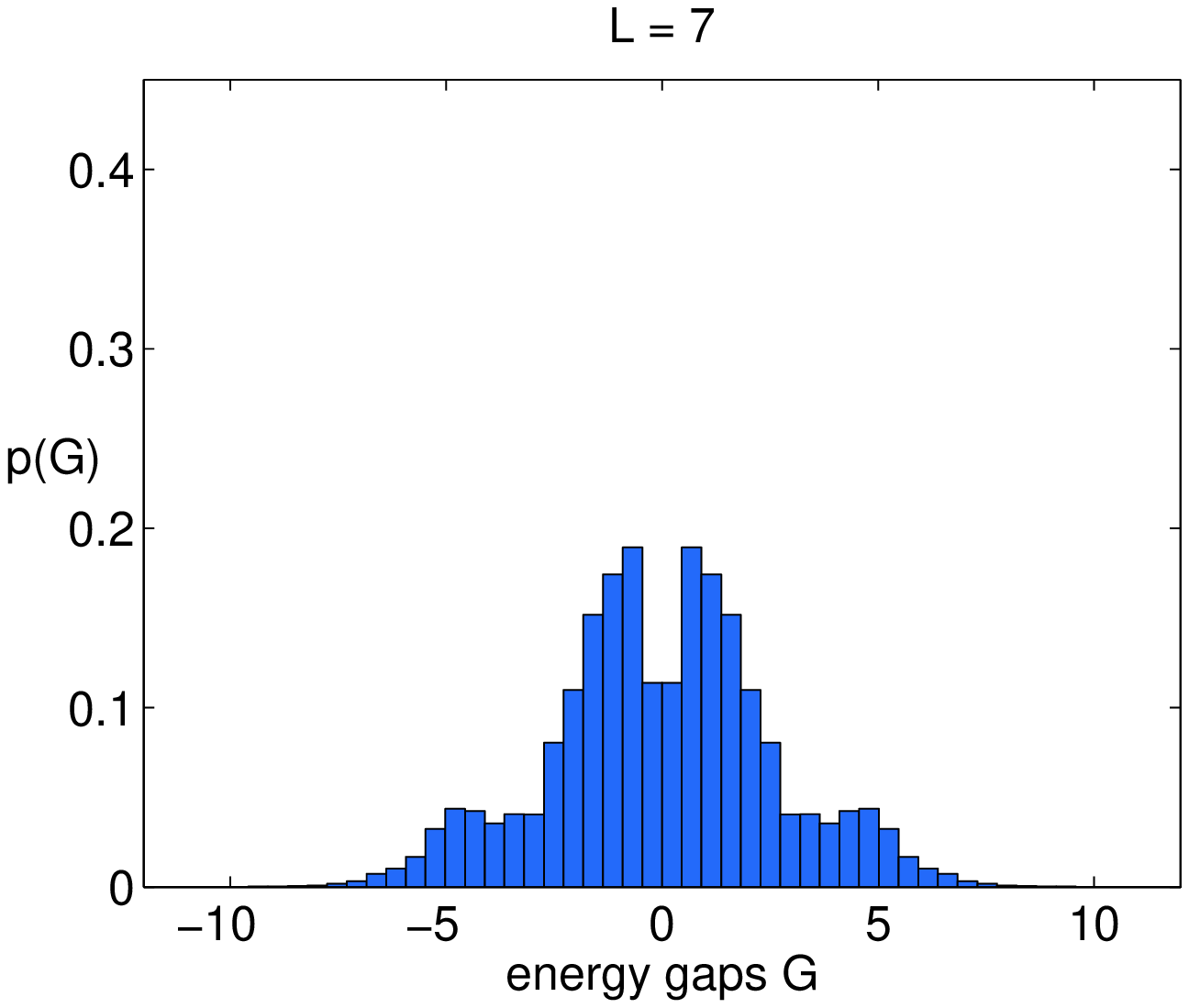}
\end{subfigure}\begin{subfigure}{.5\textwidth}
  \centering
  \includegraphics[width=1\textwidth]{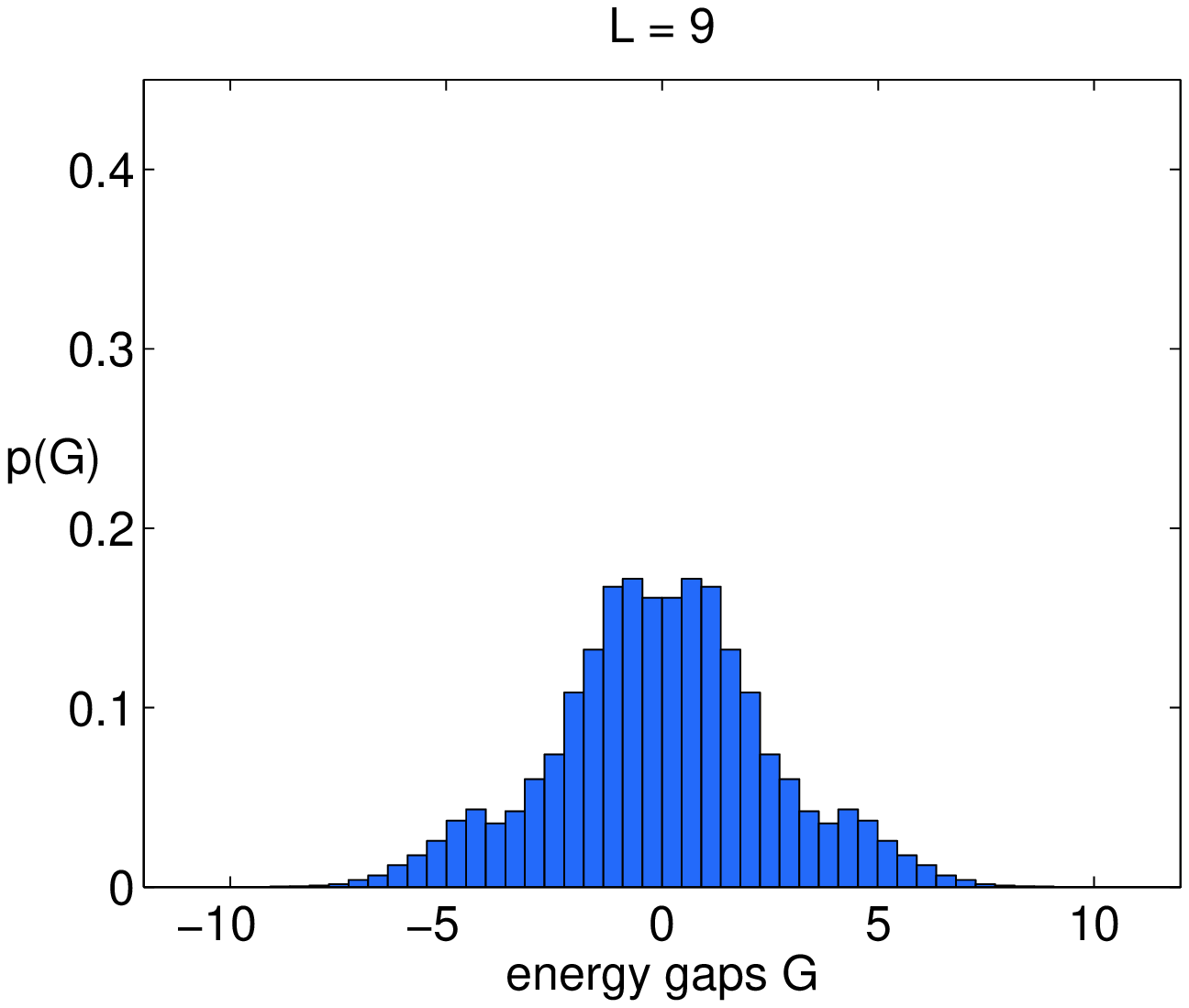}
\end{subfigure}
\caption{ 
Normalized histogram of the distribution $p_\alpha$ for an observable $A_x = \sigma_1^x \otimes  \bigotimes_{\lambda = 2}^L \id_\lambda$ and initial state $\rho_0 = \left| 1\right\rangle \left\langle 1 \right| \otimes \bigotimes_{\lambda = 2}^L \id_\lambda/2$ as a function of the energy gaps for a spin ring with coupling strength $\gamma = 1.1 \Omega$,
for increasing number $L$ of spins. 
For small $L$ the distribution $p_\alpha$ is composed of distinct peaks.
On the other hand, $p_\alpha$ is spread over more values as the size of the system increases. At the same time, as $L$ increases the distribution becomes more distinctly unimodal.
}
\label{fig:physicaltimescales-simulation}
\end{figure*}

\begin{figure*}
\centering
\textbf{Functions $a(\epsilon)$ and $\delta(\epsilon)$ for observable $A_x$ on translationally invariant spin ring
\
\vspace{8pt}
\\
}
\begin{subfigure}{.5\textwidth}
  \centering
  \includegraphics[width=1\textwidth]{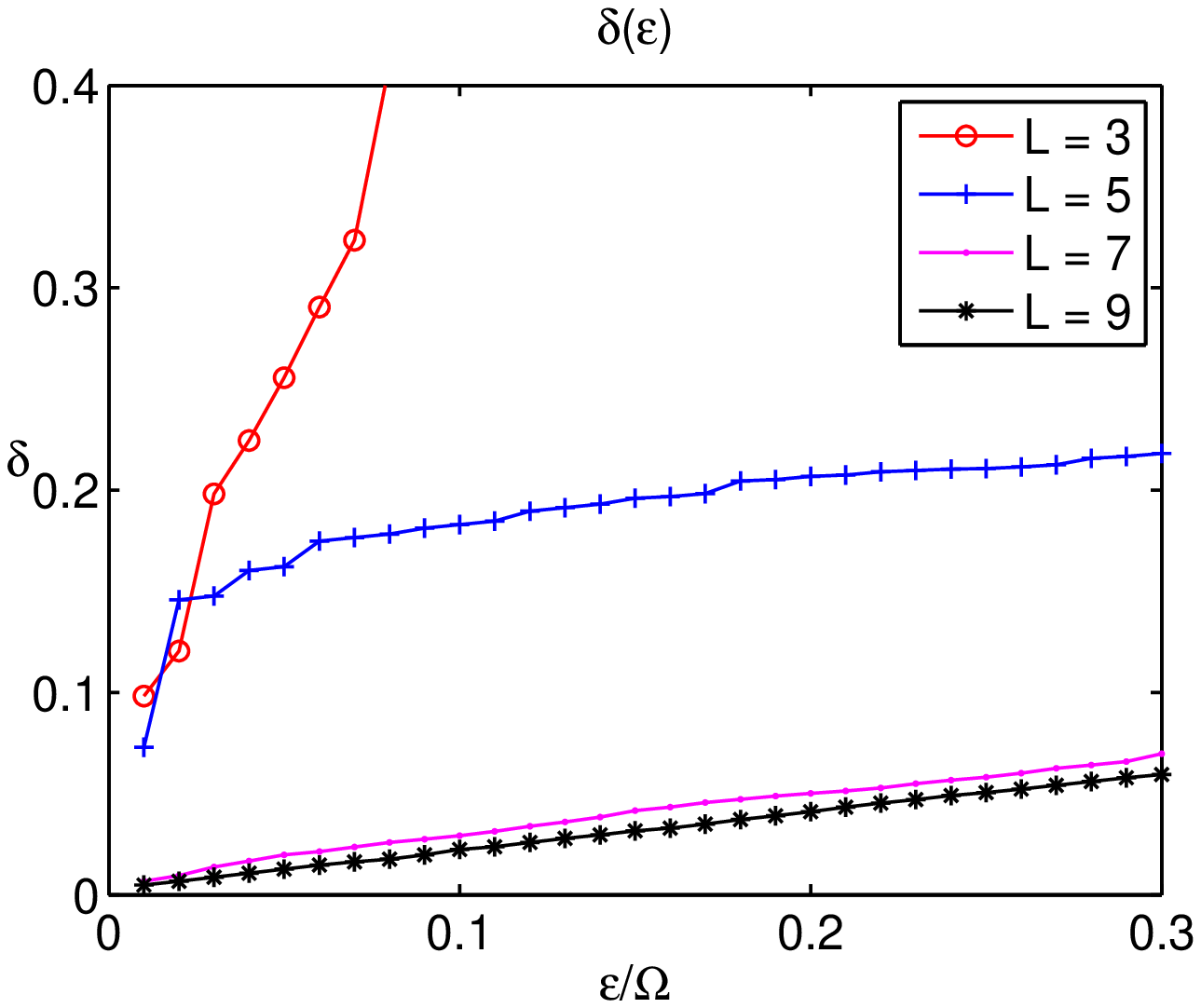}
\end{subfigure}%
\begin{subfigure}{.5\textwidth}
  \centering
  \includegraphics[width=1\textwidth]{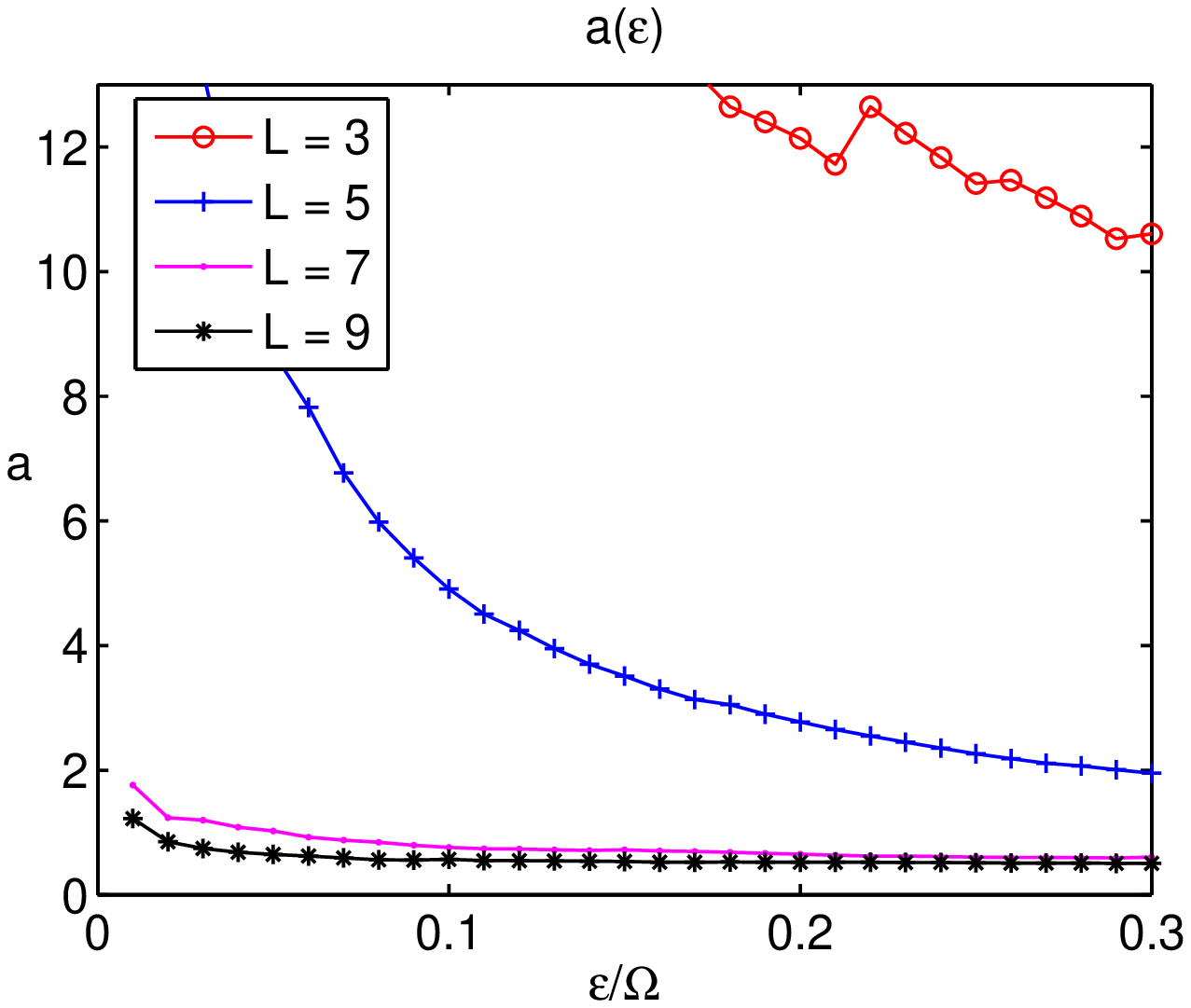}
\end{subfigure}
\caption{ 
Plots of $\delta(\epsilon)$ and $a(\epsilon)$, defined in Proposition~\ref{proposition-xidependence}, for an observable $A_x = \sigma_1^x \otimes \bigotimes_{\lambda = 2}^L \id_\lambda$ and initial state $\rho_0 = \left| 1\right\rangle \left\langle 1 \right| \otimes \bigotimes_{\lambda = 2}^L \id_\lambda/2$, for a spin ring with coupling constant $\gamma = 1.1 \Omega$. 
There is a trade-off between having small $\delta(\epsilon)$ and small $a(\epsilon)$.
However, for a fixed energy gap interval $\epsilon$ both $a(\epsilon)$ and $\delta(\epsilon)$ decrease as $L$ increases. 
Hence, as $L$ increases it becomes possible to find $\epsilon$ such that $a \sim 1$ and $\delta \ll 1$. 
Indeed, for $L = \{3, 5, 7, 9\}$ we have $a(\epsilon) \approx 1$ and $\delta(\epsilon) \approx \{1, 0.62, 0.02, 0.006\}$ for $\epsilon \approx \{3.26, 1.65, 0.05, 0.02\} \Omega$, respectively.
}
\label{fig:physicaltimescales-simulation-adelta}
\end{figure*}

Figure~\ref{fig:physicaltimescales-simulation-adeltaRandom}
 shows similar behaviour of $a(\epsilon)$ and $\delta(\epsilon)$ for a spin ring with random couplings.
We simulated a Hamiltonian 
\begin{align}
H = \Omega \sum_{\lambda = 1}^L \sigma_\lambda^z +  \Omega \sum_{  \lambda = 1}^{L} K_{\lambda} \sigma_\lambda^x \otimes \sigma_{\lambda + 1}^x,
\end{align}
where the couplings $K_\lambda$ are drawn at random from a Gaussian distribution with mean $\gamma$ and standard deviation $w$. 
For completeness, in figure~\ref{fig:physicaltimescales-simulation-adeltaRandom} we focus this time on an observable $A_z$, and the same initial state as above:
\begin{align}
A_z = \sigma_1^z \otimes \bigotimes_{\lambda = 2}^L \id_\lambda, \qquad \rho_0 = \ket{1} \bra{1} \otimes \bigotimes_{\lambda = 2}^L \frac{\id_\lambda}{2}.
\end{align}

\begin{figure*}
\centering
\textbf{Functions $a(\epsilon)$ and $\delta(\epsilon)$ for observable $A_z$ on a spin ring with random couplings
\
\vspace{8pt}
\\
}
\begin{subfigure}{.5\textwidth}
  \centering
  \includegraphics[width=1\textwidth]{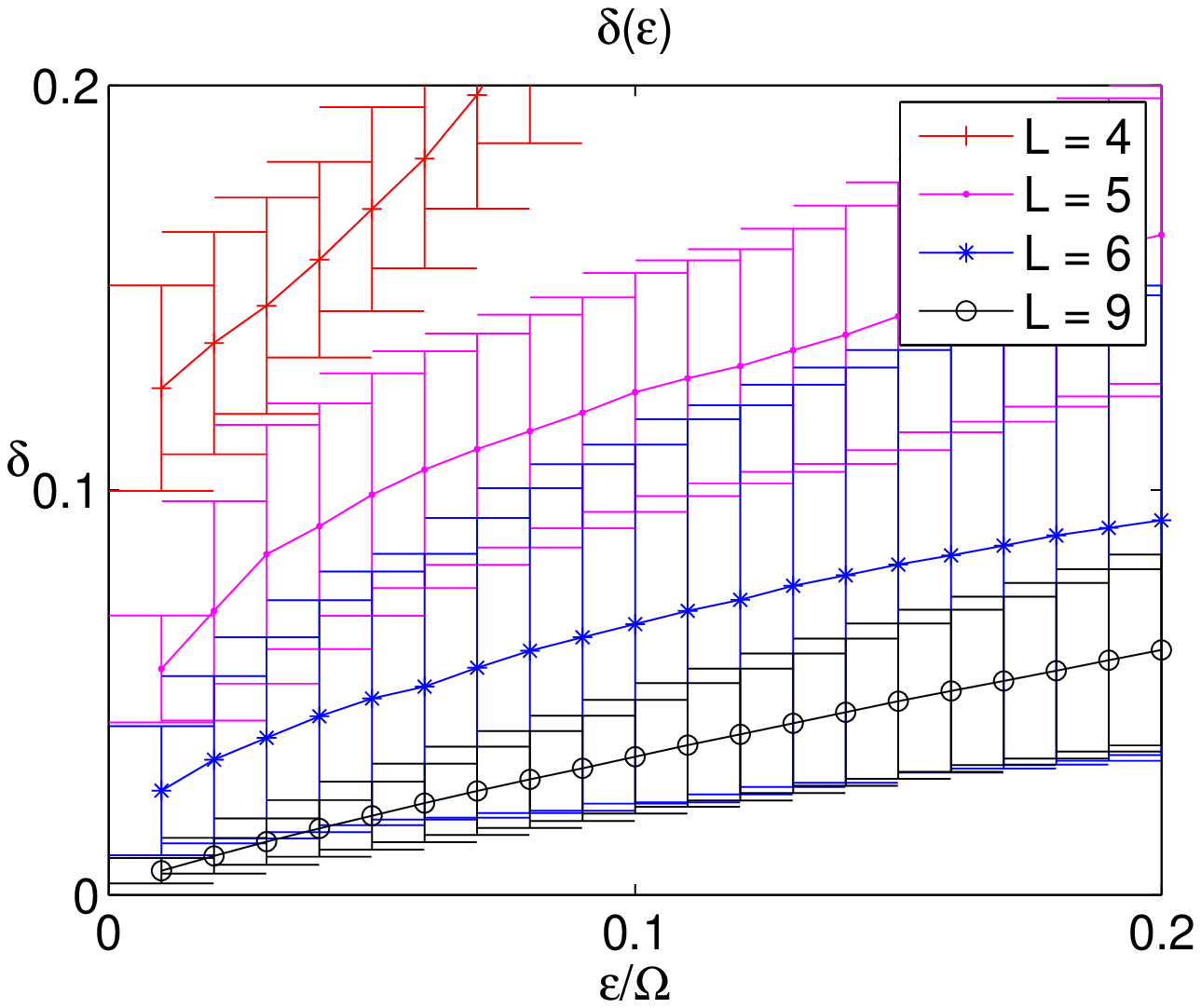}
\end{subfigure}\begin{subfigure}{.5\textwidth}
  \centering
  \includegraphics[width=1\textwidth]{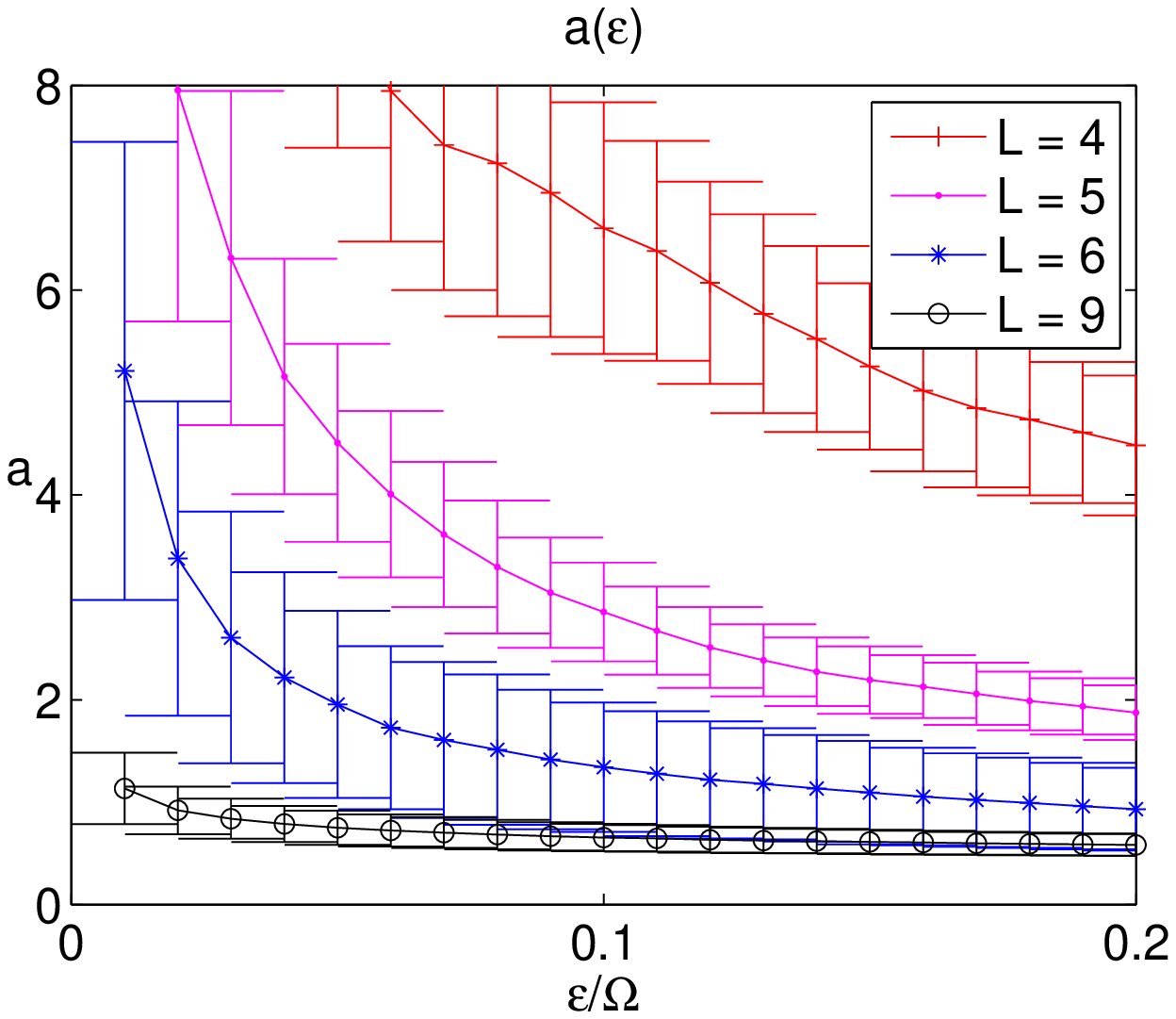}
\end{subfigure}
\caption{ 
Plots of average $\delta(\epsilon)$ and average $a(\epsilon)$ and their respective standard deviations (after 1000 realizations) for an observable $A_z = \sigma_1^z \otimes  \bigotimes_{\lambda = 2}^L \id_\lambda$ and initial state $\rho_0 = \left| 1\right\rangle \left\langle 1 \right| \otimes \bigotimes_{\lambda = 2}^L \id_\lambda/2$, for a spin ring with random couplings $K_\lambda$ with mean $\gamma = \Omega$ and standard deviation $w = 0.2 \Omega$. 
Once again, both $a(\epsilon)$ and $\delta(\epsilon)$ decrease as $L$ increases, for a fixed energy gap interval $\epsilon$. 
Hence, as $L$ increases it becomes possible to find $\epsilon$ such that $a \sim 1$ and $\delta \ll 1$. 
}
\label{fig:physicaltimescales-simulation-adeltaRandom}
\end{figure*}

\end{document}